%% file: main.tex
\documentclass[10pt,journal,letterpaper]{IEEEtran}
\pdfoutput=1
\usepackage[T1]{fontenc}
\usepackage{amsmath,amssymb,amsfonts,bm}
\usepackage{booktabs,array,tabularx,ragged2e}
\usepackage{cite}
\usepackage{graphicx}
\usepackage{microtype}
\usepackage{xcolor}
\usepackage{flushend}
\usepackage{needspace}
\usepackage{etoolbox}
\usepackage{hyperref}

\patchcmd{\section}
  {\normalfont\normalsize\centering\scshape}
  {\normalfont\fontsize{11}{13}\selectfont\centering\bfseries}
  {}
  {\PackageError{arxiv-layout}
    {Could not update section heading style}{}}

\graphicspath{{figures/}}
\hypersetup{
  colorlinks=true,
  linkcolor=blue,
  citecolor=blue,
  urlcolor=black,
  pdftitle={Phase-Resolved Molecular Timing Channels under Pulsatile Drift: Volterra Characterization and Corrected Inverse-Gaussian Approximation},
  pdfauthor={Yen-Chi Lee},
  pdfsubject={Phase-resolved first-hitting molecular timing channels under deterministic pulsatile drift},
  pdfkeywords={Channel modeling, corrected inverse-Gaussian approximation, first-hitting time, molecular communication, pulsatile drift, release phase, Volterra integral equation}
}

\newcommand{\Prob}{\mathbb{P}}
\newcommand{\R}{\mathbb{R}}
\newcommand{\Torus}{\mathbb{T}}

\newtheorem{proposition}{Proposition}
\newtheorem{theorem}{Theorem}
\newtheorem{lemma}{Lemma}
\newtheorem{corollary}{Corollary}
\newcommand{\secref}[2]{Section~\ref{#1}-\ref{#2}}
\newcommand{\appsubref}[2]{Appendix~\ref{#1}-\ref{#2}}
\renewcommand{\thetable}{\arabic{table}}
\renewcommand{\thesubsection}{\Alph{subsection}}

\begin{document}

\title{Phase-Resolved Molecular Timing Channels under Pulsatile Drift: Volterra Characterization and Corrected Inverse-Gaussian Approximation}

\author{Yen-Chi Lee,~\IEEEmembership{Member,~IEEE}%
\thanks{Yen-Chi Lee is with the Department of Mathematics, National Central University, Taoyuan, Taiwan (e-mail: yclee@math.ncu.edu.tw).}%
\thanks{This work was supported by the National Science and Technology Council of Taiwan under Grant NSTC 113-2115-M-008-013-MY3.
This paper extends our conference manuscript~\cite{Lee2026CIG}.}}

\maketitle
\thispagestyle{plain}

\begin{abstract}
Pulsatile drift makes molecular arrival times depend on the
release phase. We study a one-dimensional absorbing timing
channel with constant diffusivity and positive-mean sinusoidal drift.
A moving-boundary formulation yields an exact Volterra
characterization of the phase-conditioned first-hitting law,
with regularity, uniqueness, properness, and moment properties.
Within this framework, we integrate and assess the corrected
inverse-Gaussian (C-IG) approximation introduced in our conference
work. Comparisons with a numerically validated Volterra reference
distinguish raw-kernel mass discrepancy from normalized shape
errors and show that close cumulative agreement can coexist with
density and peak-count discrepancies, particularly under
transient backflow. Phase-family analysis further shows that
 a small between-phase contribution to delay variance can coexist
with substantial differences between conditional cumulative distributions.
We connect these conditional laws and protocol-induced mixtures
to timing likelihoods and absorption probabilities within a
prescribed reception window, clarifying the arrival-time
structures preserved by C-IG and the phase information
concealed by averaging.
\end{abstract}

\begin{IEEEkeywords}
Channel modeling, 
corrected inverse-Gaussian approximation, first-hitting time,
molecular communication, pulsatile drift, release phase,
Volterra integral equation.
\end{IEEEkeywords}

\input{sections/01_introduction}
\input{sections/02_system_model}
\input{sections/03_exact_characterization}
\input{sections/04_cig_approximation}
\input{sections/05_numerical_methods}
\input{sections/06_cig_accuracy}
\input{sections/07_phase_structure}
\input{sections/08_conclusion}

\appendices

\section{Proof of the Phase-resolved Volterra Characterization}
\label{app:volterra-proof}

Fix $\eta\in\Torus$ and a finite horizon $T_{\mathrm h}>0$.
For this proof, write $b=b_\eta$ and $f=f_\eta$.
The boundary in~\eqref{eq:physical-boundary} satisfies
$b\in C^\infty([0,T_{\mathrm h}])$ and
$b(0)=d/\sqrt{2D}>0$.
Classical moving-boundary theory therefore provides a continuous
first-passage density on $(0,T_{\mathrm h}]$
\cite{Peskir2002,Lehmann2002}.
At this stage, only the subdensity property
$\int_0^t f(u)\,du\leq1$ is needed.

\subsection{First-contact decomposition}

Let $T=\inf\{t>0:W(t)=b(t)\}$.
For fixed $t>0$ and $x\geq b(t)$, any continuous path with
$W(t)\geq x$ must have crossed the boundary by time $t$.
Conditioning on $T=u$ and using the strong Markov property gives
\begin{equation}
  \Prob(W(t)\geq x)
  =
  \int_0^t
  \overline\Phi\!\left(
    \frac{x-b(u)}{\sqrt{t-u}}
  \right)f(u)\,du.
  \label{eq:app-first-contact-tail}
\end{equation}
Setting $x=b(t)$ yields~\eqref{eq:fortet-tail-identity}.

For $x>b(t)$, differentiation with respect to $x$ gives
\begin{equation}
  p_t(x)
  =
  \int_0^t p_{t-u}(x-b(u))f(u)\,du.
  \label{eq:app-first-contact-density}
\end{equation}
To pass to $x\downarrow b(t)$, split the integral at $t/2$.
On $[0,t/2]$, the Gaussian kernel is uniformly bounded and
$f$ is integrable. On $[t/2,t)$, continuity of $f$ gives
a finite bound, while
$p_{t-u}(x-b(u))\leq[2\pi(t-u)]^{-1/2}$.
The latter bound is integrable in $u$.
Dominated convergence therefore yields
\eqref{eq:fortet-density-identity}.

\subsection{Second-kind equation}

Define
\begin{equation}
  H(t)=\overline\Phi\!\left(\frac{b(t)}{\sqrt t}\right).
  \label{eq:app-fortet-left-side}
\end{equation}
Direct differentiation gives
\begin{equation}
  H'(t)
  =
  \left[\frac{b(t)}{2t}-b'(t)\right]p_t(b(t)).
  \label{eq:app-fortet-left-derivative}
\end{equation}
For $u<t$, put
\begin{equation}
  \Psi(t,u)=
  \overline\Phi\!\left(
    \frac{b(t)-b(u)}{\sqrt{t-u}}
  \right).
  \label{eq:app-fortet-tail-kernel}
\end{equation}
Since $b$ is differentiable,
$\Psi(t,u)\to1/2$ as $u\uparrow t$.
Moreover,
\begin{equation}
  \begin{aligned}
    \partial_t\Psi(t,u)
    &=
    -\left[
      b'(t)-\frac{b(t)-b(u)}{2(t-u)}
    \right]\\
    &\quad\times
    p_{t-u}(b(t)-b(u)).
  \end{aligned}
  \label{eq:app-fortet-kernel-derivative}
\end{equation}
On compact time intervals away from zero, this derivative has
at most an integrable $(t-u)^{-1/2}$ singularity.
Thus, differentiation of the Fortet identity gives
\begin{equation}
  \begin{aligned}
    H'(t)
    &=\frac12 f(t)\\
    &\quad-
    \int_0^t
    \left[
      b'(t)-\frac{b(t)-b(u)}{2(t-u)}
    \right]\\
    &\hspace{35mm}\times
    p_{t-u}(b(t)-b(u))f(u)\,du.
  \end{aligned}
  \label{eq:app-differentiated-fortet}
\end{equation}
Combining this expression with
\eqref{eq:app-fortet-left-derivative} yields
\begin{equation}
  \begin{aligned}
    f(t)
    &=
    \left[\frac{b(t)}{t}-2b'(t)\right]p_t(b(t))\\
    &\quad+
    \int_0^t
    \left[
      2b'(t)-\frac{b(t)-b(u)}{t-u}
    \right]\\
    &\hspace{35mm}\times
    p_{t-u}(b(t)-b(u))f(u)\,du.
  \end{aligned}
  \label{eq:app-intermediate-volterra}
\end{equation}
Finally, the density-level Fortet identity implies
\begin{equation}
  b'(t)p_t(b(t))
  =
  \int_0^t
    b'(t)p_{t-u}(b(t)-b(u))f(u)\,du.
  \label{eq:app-density-cancellation}
\end{equation}
Using this equality in
\eqref{eq:app-intermediate-volterra} gives
\eqref{eq:volterra-second-kind} with the forcing and kernel in
\eqref{eq:volterra-forcing} and~\eqref{eq:volterra-kernel}.
This is the classical smooth-boundary second-kind formulation
\cite{Durbin1985,Buonocore1987,Peskir2002} specialized to
$b_\eta$.

\subsection{Continuous extensions and uniqueness}

Let $b_*=b(0)>0$. For sufficiently small $t>0$,
$b(t)\geq b_*/2$, while $b$ and $b'$ remain bounded.
Consequently, for constants $C_1,C_2>0$,
\begin{equation}
  |h_\eta(t)|
  \leq
  C_1\left(t^{-3/2}+t^{-1/2}\right)
  \exp\!\left(-\frac{C_2}{t}\right)
  \longrightarrow0.
  \label{eq:app-forcing-origin}
\end{equation}
Thus $h_\eta(0)=0$ is a continuous extension.

Appendix~\ref{app:kernel-bound-proof} establishes a uniform
bound
\begin{equation}
  |K_\eta(t,u)|
  \leq C_3\sqrt{t-u},
  \qquad 0\leq u<t\leq T_{\mathrm h},
  \label{eq:app-kernel-continuity}
\end{equation}
for a finite constant $C_3$.
The kernel is continuous off the diagonal, and this bound
extends it continuously to the closed triangle by setting
$K_\eta(t,t)=0$.

Using the subdensity property in the Volterra equation gives
\begin{equation}
  \begin{aligned}
    |f_\eta(t)|
    &\leq |h_\eta(t)|
      +C_3\sqrt t\int_0^t f_\eta(u)\,du\\
    &\leq |h_\eta(t)|+C_3\sqrt t
      \longrightarrow0.
  \end{aligned}
  \label{eq:app-density-origin}
\end{equation}
Hence $f_\eta(0)=0$ also gives a continuous extension.

Suppose $g_1,g_2\in C([0,T_{\mathrm h}])$ solve
\eqref{eq:volterra-second-kind}, and put $z=g_1-g_2$.
Continuity of the kernel on the compact triangle implies that
$C_K=\max_{\Delta_{T_{\mathrm h}}}|K_\eta|<\infty$.
Therefore
\begin{equation}
  |z(t)|\leq C_K\int_0^t|z(u)|\,du.
  \label{eq:app-volterra-uniqueness}
\end{equation}
Gronwall's inequality gives $z\equiv0$.
The first-passage density supplies existence, and the preceding
argument establishes uniqueness in $C([0,T_{\mathrm h}])$.
This proves Theorem~\ref{thm:phase-resolved-volterra}.

\section{Kernel Regularity and Dimensionless Bounds}
\label{app:kernel-bound-proof}

\subsection{Secant--tangent cancellation}

Let $b\in C^2([0,T_{\mathrm h}])$ and
$L_b=\sup_{0\leq r\leq T_{\mathrm h}}|b''(r)|$.
For $0\leq u<t\leq T_{\mathrm h}$, the fundamental theorem
of calculus gives
\begin{equation}
  \begin{aligned}
    b'(t)-\frac{b(t)-b(u)}{t-u}
    &=
    \frac{1}{t-u}\int_u^t[b'(t)-b'(r)]\,dr\\
    &=
    \frac{1}{t-u}\int_u^t\int_r^t b''(z)\,dz\,dr.
  \end{aligned}
  \label{eq:app-secant-tangent-representation}
\end{equation}
It follows that
\begin{equation}
  \left|
    b'(t)-\frac{b(t)-b(u)}{t-u}
  \right|
  \leq
  \frac{L_b}{t-u}\int_u^t(t-r)\,dr
  =\frac{L_b}{2}(t-u).
  \label{eq:app-secant-tangent-bound}
\end{equation}
Since
$p_{t-u}(x)\leq[2\pi(t-u)]^{-1/2}$,
multiplication yields
\begin{equation}
  |K_b(t,u)|
  \leq
  \frac{L_b}{2\sqrt{2\pi}}\sqrt{t-u}.
  \label{eq:app-general-kernel-bound}
\end{equation}
This proves the general bound in
Lemma~\ref{lem:curvature-kernel-bound}.

For the moving boundary $b_\eta$,
\eqref{eq:boundary-derivatives} implies
\begin{equation}
  \sup_{0\leq r\leq T_{\mathrm h}}
  |b_\eta''(r)|
  \leq\frac{a\omega}{\sqrt{2D}}.
  \label{eq:app-physical-curvature-bound}
\end{equation}
Substitution into~\eqref{eq:app-general-kernel-bound} gives
\eqref{eq:physical-kernel-curvature-bound}, uniformly in
$\eta$.

\subsection{Brownian scaling}

Recall that $t_a=d/v_0$ and
$B(\tau)=W(t_a\tau)/\sqrt{t_a}$.
The moving boundary $b_\eta$ before nondimensionalization
therefore becomes
\begin{equation}
  \begin{aligned}
    \beta_\eta(\tau)
    &=\frac{b_\eta(t_a\tau)}{\sqrt{t_a}}\\
    &=\frac{d}{\sqrt{2Dt_a}}
      [1-m_\eta(\tau)]\\
    &=\sqrt{\frac{\mathrm{Pe}}{2}}\,
      [1-m_\eta(\tau)].
  \end{aligned}
  \label{eq:app-boundary-scaling}
\end{equation}
The Brownian transition density obeys
\begin{equation}
  p_{t_a r}(\sqrt{t_a}\,x)
  =t_a^{-1/2}p_r(x).
  \label{eq:app-gaussian-scaling}
\end{equation}
Together with
$b_\eta'(t_a\tau)=t_a^{-1/2}\beta_\eta'(\tau)$,
this gives
\begin{equation}
  \begin{aligned}
    t_a h_\eta(t_a\tau)
    &=
    \left[
      \frac{\beta_\eta(\tau)}{\tau}
      -\beta_\eta'(\tau)
    \right]p_\tau(\beta_\eta(\tau)),\\
    t_a K_\eta(t_a\tau,t_a\xi)
    &=
    \left[
      \beta_\eta'(\tau)
      -\frac{\beta_\eta(\tau)-\beta_\eta(\xi)}
             {\tau-\xi}
    \right]\\
    &\quad\times
    p_{\tau-\xi}(\beta_\eta(\tau)-\beta_\eta(\xi)).
  \end{aligned}
  \label{eq:app-scaled-volterra-coefficients}
\end{equation}
These are precisely the forcing and kernel obtained by applying
the standard-Brownian construction directly to $\beta_\eta$.

Alternatively, multiplying the physical Volterra equation at
$t=t_a\tau$ by $t_a$, changing variables $u=t_a\xi$, and using
$\widetilde f_\eta(\xi)=t_a f_\eta(t_a\xi)$ gives
\eqref{eq:dimensionless-volterra-equation} and
\eqref{eq:dimensionless-volterra-coefficients}.

\subsection{Dimensionless operator bound}

The dimensionless boundary satisfies
\begin{equation}
  \sup_{0\leq r\leq\tau_{\mathrm h}}
  |\beta_\eta''(r)|
  \leq
  \sqrt{\frac{\mathrm{Pe}}{2}}\,\alpha\Omega.
  \label{eq:app-dimensionless-curvature-bound}
\end{equation}
Applying~\eqref{eq:app-general-kernel-bound} to $\beta_\eta$
therefore gives
\begin{equation}
  |\widetilde K_\eta(\tau,\xi)|
  \leq
  \frac{\alpha\Omega\sqrt{\mathrm{Pe}}}{4\sqrt{\pi}}
  \sqrt{\tau-\xi}.
  \label{eq:app-dimensionless-kernel-bound}
\end{equation}
For $g\in C([0,\tau_{\mathrm h}])$,
\begin{equation}
  \begin{aligned}
    |(\widetilde{\mathcal K}_\eta g)(\tau)|
    &\leq
    \frac{\alpha\Omega\sqrt{\mathrm{Pe}}}{4\sqrt{\pi}}
    \|g\|_\infty
    \int_0^\tau\sqrt{\tau-\xi}\,d\xi\\
    &=
    \frac{\alpha\Omega\sqrt{\mathrm{Pe}}}{6\sqrt{\pi}}
    \tau^{3/2}\|g\|_\infty.
  \end{aligned}
  \label{eq:app-volterra-operator-estimate}
\end{equation}
Taking the supremum over $0\leq\tau\leq\tau_{\mathrm h}$
proves~\eqref{eq:dimensionless-operator-bound}.

\section{Properness, Exponential Moments, and Stopped-moment Identities}
\label{app:properness-proof}

\subsection{Pathwise comparison}

For the sinusoidal drift,
\begin{equation}
  R_\eta(t)
  =\frac{a}{\omega}
    [\cos\eta-\cos(\eta+\omega t)]
  \geq-\frac{a}{\omega}(1-\cos\eta)
  =-C_\eta.
  \label{eq:app-displacement-lower-bound}
\end{equation}
Let $L_\eta=d+C_\eta$ and define
\begin{equation}
  \begin{aligned}
    U(t)&=v_0t+\sqrt{2D}\,W(t),\\
    T_\eta^+&=\inf\{t>0:U(t)=L_\eta\}.
  \end{aligned}
  \label{eq:app-comparison-process}
\end{equation}
Since $v_0>0$, $T_\eta^+$ is finite almost surely.
At this time,
\begin{equation}
  X_\eta(T_\eta^+)
  =U(T_\eta^+)+R_\eta(T_\eta^+)
  \geq L_\eta-C_\eta=d.
  \label{eq:app-comparison-at-hitting}
\end{equation}
Continuity and $X_\eta(0)=0<d$ imply
\begin{equation}
  T_\eta\leq T_\eta^+
  \qquad\text{almost surely}.
  \label{eq:app-pathwise-domination}
\end{equation}
Thus $T_\eta$ is also finite almost surely.

\subsection{Exponential integrability}

The comparison time has the positive-drift IG density
\cite{Srinivas2012}
\begin{equation}
  g_{L_\eta}(t)
  =
  \frac{L_\eta}{\sqrt{4\pi Dt^3}}
  \exp\!\left[
    -\frac{(L_\eta-v_0t)^2}{4Dt}
  \right],
  \qquad t>0.
  \label{eq:app-comparison-ig-density}
\end{equation}
For $0\leq\theta<v_0^2/(4D)$, expansion of the exponent gives
\begin{equation}
  \begin{aligned}
    \mathbb E[e^{\theta T_\eta^+}]
    &=
    \frac{L_\eta e^{L_\eta v_0/(2D)}}{\sqrt{4\pi D}}
    \int_0^\infty t^{-3/2}\\
    &\quad\times
    \exp\!\left[
      -\frac{L_\eta^2}{4Dt}
      -\left(\frac{v_0^2}{4D}-\theta\right)t
    \right]dt\\
    &=
    \exp\!\left\{
      \frac{L_\eta}{2D}
      \left[v_0-\sqrt{v_0^2-4D\theta}\right]
    \right\}.
  \end{aligned}
  \label{eq:app-comparison-exponential-moment}
\end{equation}
At $\theta=v_0^2/(4D)$, the large-time integrand is proportional
to $t^{-3/2}$ and remains integrable. Equivalently, monotone
convergence extends the last expression to the endpoint.

Since $\theta\geq0$, pathwise domination gives
\begin{equation}
  \mathbb E[e^{\theta T_\eta}]
  \leq\mathbb E[e^{\theta T_\eta^+}],
  \label{eq:app-exponential-moment-domination}
\end{equation}
which proves~\eqref{eq:exponential-moment-bound}.

For any fixed $\theta_0\in(0,v_0^2/(4D)]$ and integer $n\geq1$,
the inequality $t^n\leq n!\theta_0^{-n}e^{\theta_0t}$ gives
\begin{equation}
  \mathbb E[T_\eta^n]
  \leq
  \frac{n!}{\theta_0^n}
  \exp\!\left\{
    \frac{d+C_\eta}{2D}
    \left[v_0-\sqrt{v_0^2-4D\theta_0}\right]
  \right\}.
  \label{eq:app-polynomial-moment-bound}
\end{equation}
The upper bound $C_\eta\leq2a/\omega$ makes these estimates
uniform over release phase. This completes the proof of
Proposition~\ref{prop:properness}.

For a continuous periodic drift $v$ with period $P$ and
positive cycle mean $v_0$, the centered displacement
\begin{equation}
  R_s(t)=\int_0^t[v(s+r)-v_0]\,dr
  \label{eq:app-general-periodic-remainder}
\end{equation}
satisfies $R_s(t+P)=R_s(t)$.
It is therefore bounded below, and the same comparison
argument applies.

\subsection{Justification of the stopped-moment identities}

Fix $\eta$ and write $T=T_\eta$.
Set $T_n=T\wedge n$.
Optional stopping at these bounded times gives
\begin{equation}
  \mathbb E[W(T_n)]=0,
  \qquad
  \mathbb E[W(T_n)^2]=\mathbb E[T_n].
  \label{eq:app-bounded-stopping-identities}
\end{equation}
For integers $m\geq n$, the It\^o isometry also gives
\begin{equation}
  \mathbb E\!\left[
    |W(T_m)-W(T_n)|^2
  \right]
  =\mathbb E[T_m-T_n].
  \label{eq:app-stopped-brownian-l2}
\end{equation}
Because $\mathbb E[T]<\infty$, the right-hand side tends to
zero as $m,n\to\infty$. Thus $W(T_n)$ is Cauchy in $L^2$.
Almost-sure finiteness of $T$ and continuity of Brownian paths
identify its limit as $W(T)$.

Passing to the limit in
\eqref{eq:app-bounded-stopping-identities} yields
\begin{equation}
  \mathbb E[W(T)]=0,
  \qquad
  \mathbb E[W(T)^2]=\mathbb E[T].
  \label{eq:app-unbounded-stopping-identities}
\end{equation}
At first absorption, path continuity gives
\begin{equation}
  d-M_\eta(T_\eta)=\sqrt{2D}\,W(T_\eta).
  \label{eq:app-stopped-hitting-condition}
\end{equation}
Taking expectations and second moments therefore gives
\begin{equation}
  \begin{aligned}
    \mathbb E[d-M_\eta(T_\eta)]&=0,\\
    \mathbb E[(d-M_\eta(T_\eta))^2]
      &=2D\,\mathbb E[T_\eta].
  \end{aligned}
  \label{eq:app-stopped-displacement-moments}
\end{equation}
Since $M_\eta(t)=v_0t+R_\eta(t)$ with bounded $R_\eta$,
the polynomial moment bounds above also establish absolute
integrability of the corresponding expressions.
Writing the expectations against $f_\eta$ proves
\eqref{eq:first-stopped-moment-identity} and
\eqref{eq:second-stopped-moment-identity}.

\section{Affine Boundaries and Inverse-Gaussian Recovery}
\label{app:stationary-ig-proof}

Let $b\in C^1([0,T_{\mathrm h}])$ with $T_{\mathrm h}>0$.
If $b$ is affine, then
\begin{equation}
  b'(t)=\frac{b(t)-b(u)}{t-u},
  \qquad 0\leq u<t\leq T_{\mathrm h},
  \label{eq:app-affine-secant-slope}
\end{equation}
so~\eqref{eq:volterra-kernel} gives $K_b(t,u)=0$.

Conversely, suppose $K_b(t,u)=0$ for every $u<t$.
The Gaussian factor in the kernel is strictly positive, hence
\eqref{eq:app-affine-secant-slope} holds.
Taking $t=T_{\mathrm h}$ gives
\begin{equation}
  b(u)
  =
  b(T_{\mathrm h})
  -b'(T_{\mathrm h})(T_{\mathrm h}-u),
  \qquad 0\leq u<T_{\mathrm h}.
  \label{eq:app-affine-boundary-reconstruction}
\end{equation}
Continuity extends this expression to $u=T_{\mathrm h}$,
so $b$ is affine on the entire interval.

For the sinusoidal family,
\begin{equation}
  b_\eta''(t)
  =
  -\frac{a\omega}{\sqrt{2D}}
  \cos(\eta+\omega t).
  \label{eq:app-sinusoidal-affinity}
\end{equation}
Since $\omega>0$, the cosine cannot vanish identically on
an interval of positive length. Thus $b_\eta$ is affine
if and only if $a=0$.

When $a=0$, the boundary is independent of release phase:
\begin{equation}
  b_\eta(t)=\frac{d-v_0t}{\sqrt{2D}},
  \qquad
  b_\eta'(t)=-\frac{v_0}{\sqrt{2D}}.
  \label{eq:app-constant-drift-boundary}
\end{equation}
Consequently,
\begin{equation}
  \frac{b_\eta(t)}{t}-b_\eta'(t)
  =\frac{d}{\sqrt{2D}\,t}.
  \label{eq:app-constant-drift-forcing-factor}
\end{equation}
Substitution into~\eqref{eq:volterra-forcing} yields
\begin{equation}
  h_\eta(t)
  =
  \frac{d}{\sqrt{4\pi Dt^3}}
  \exp\!\left[-\frac{(d-v_0t)^2}{4Dt}\right].
  \label{eq:app-recovered-ig-density}
\end{equation}
Since $K_\eta=0$, the Volterra equation gives
$f_\eta=h_\eta=f_{\mathrm{IG}}$.
This proves Corollary~\ref{cor:stationary-ig-recovery}.

\input{appendices/05_numerical_implementation}

\section*{Acknowledgment}
The author used ChatGPT (OpenAI; GPT-6 Astra) as an assistive tool for language refinement, manuscript organization, and routine debugging of selected computational scripts. All mathematical derivations, numerical methodologies, reported results, references, scientific interpretations, and conclusions were independently verified and approved by the author.

\input{references}
\end{document}

%% file: sections/01_introduction.tex
\section{Introduction}
\label{sec:introduction}

\IEEEPARstart{M}{olecular} communication (MC) uses molecular transport
to convey information, with potential applications in targeted
delivery and biological sensing~\cite{Farsad2016,Jamali2019,Chahibi2013}.
In vascular environments, cardiac pulsation motivates channel models
that account for time-varying flow and molecular release
timing~\cite{Wille2025,Symeonidis2026}.
For an absorbing receiver, the relevant observable is the
first-hitting time (FHT): the delay until a molecule first reaches
the absorbing boundary.
Different release phases expose molecules to different sequences
of forward flow and possible backflow, producing phase-dependent
arrival distributions.
Figure~\ref{fig:phase-resolved-channel-concept} illustrates the
application context and the idealized absorbing link studied here.

Under constant positive drift, the one-dimensional FHT follows
an inverse-Gaussian (IG) law, which underlies the additive IG
timing channel~\cite{Srinivas2012}.
Under pulsatile drift, each release phase selects a subsequent
drift history and a conditional arrival law.
First absorption constrains this law: molecules absorbed earlier
cannot contribute to later arrivals.
The channel therefore requires a family of phase-conditioned
first-contact distributions.
Related MC studies address mobility-induced
fluctuations~\cite{Ahmadzadeh2017Mobile}, cyclostationary
signaling~\cite{KoikeAkino2017}, time-dependent drift and
diffusion~\cite{Chouhan2022}, and passive reception under
oscillatory electric fields~\cite{Cho2022}.
For vascular transport, analytical concentration models describe
pulsatile straight and closed-loop
channels~\cite{Symeonidis2026}, while physically parameterized IG
mixtures describe transport through complex
networks~\cite{Jakumeit2025MIGHT}.
Our focus is the release-phase dependence of arrival times
subject to first absorption.

The analytical difficulty lies in enforcing first absorption under
time-dependent drift.
Molini \emph{et al.} identified a drift--diffusivity proportionality
condition for an exact method-of-images construction and examined
its approximate use under periodic drift with constant
diffusion~\cite{Molini2011}.
Transforming cumulative drift into a moving boundary instead yields
classical first-passage integral
representations~\cite{Fortet1943,Durbin1985,Buonocore1987,Peskir2002}.
These provide an exact Volterra characterization, whose numerical
evaluation supplies the reference law for this study.
Explicit approximations offer a complementary route to repeated
channel evaluation.
Using a Girsanov transformation of joint first-hitting
time--location statistics, Lee \emph{et al.} obtained an exact
three-dimensional channel impulse response under constant uniform
drift~\cite{Lee2026Exact3D}.
Building on that result, Chou \emph{et al.} approximated absorption
under time-varying electric fields through a time-averaged
effective-drift substitution~\cite{Chou2026FieldAssisted}.
Our conference work introduced the corrected inverse-Gaussian
(C-IG) approximation~\cite{Lee2026CIG}, combining a cumulative-drift
exponential core, an instantaneous-drift-dependent Gaussian
positive-part prefactor, and phase-dependent normalization.

This article integrates and extends the conference
study~\cite{Lee2026CIG}.
We retain its C-IG construction, direct Volterra comparison, and
amplitude--frequency--phase assessment.
The journal extension develops the timing-channel formulation,
establishes properties of the exact law, and examines
phase-conditioned families and protocol-induced mixtures.
The central questions are how much arrival-time structure C-IG
preserves across release phases and which conditional channel
features are concealed by phase averaging.

The contributions of this integrated study are as follows.
\begin{enumerate}
  \item \textbf{Exact phase-resolved channel characterization.}
  We formulate a one-dimensional absorbing timing channel with
  constant diffusivity and positive-mean sinusoidal drift.
  Classical first-passage identities yield a second-kind Volterra
  characterization, for which we establish regularity and
  uniqueness, together with properness and moment properties
  of the first-hitting law.
  We derive its dimensionless representation, accommodate
  transient backflow, and recover the constant-drift IG law.

\item \textbf{C-IG approximation and arrival-time structure.}
  To the best of our knowledge, our
  conference work~\cite{Lee2026CIG} introduced the first tractable
  analytical approximation in MC specifically for
  release-phase-conditioned first-hitting distributions under
  sinusoidal drift with constant diffusivity, including transient
  flow reversal.
  C-IG combines a cumulative-drift exponential core, an
  instantaneous-drift-dependent positive-part prefactor, and
  phase-dependent normalization.
  Using a numerically validated Volterra reference, we distinguish
  raw-kernel mass discrepancy from normalized density and CDF
  errors, and assess accuracy and peak-count agreement across
  sampled operating regimes.

  \item \textbf{Phase structure and communication implications.}
  Phase--time densities, variance decomposition, and pairwise
  cumulative distribution function (CDF) separation characterize
  release-phase sensitivity.
  A small between-phase variance contribution can coexist with
  substantial CDF differences.
  We identify the conditional laws entering timing likelihoods
  and the mixtures induced by release protocols, and relate
  CDF approximation errors to absorption probabilities within a prescribed reception window.
\end{enumerate}

Section~\ref{sec:phase-resolved-channel} introduces the channel model.
Sections~\ref{sec:volterra-characterization}
and~\ref{sec:cig-approximation} present the exact characterization
and C-IG approximation.
Section~\ref{sec:numerical-methods} describes numerical evaluation
and validation.
Sections~\ref{sec:cig-accuracy} and~\ref{sec:phase-structure}
examine C-IG accuracy and phase-family structure, respectively.
Section~\ref{sec:conclusion} concludes the paper.

%% file: sections/02_system_model.tex
\section{System Model and Phase-resolved Timing Channel}
\label{sec:phase-resolved-channel}
\label{sec:system-model}

We consider the one-dimensional absorbing link in
Fig.~\ref{fig:phase-resolved-channel-concept}(b), motivated by molecular
signaling and delivery in pulsatile environments as illustrated in
Fig.~\ref{fig:phase-resolved-channel-concept}(a). The model isolates how a
prescribed time-varying drift and the molecular release epoch determine the
propagation delay. It extends the constant-drift setting of the additive
inverse-Gaussian (IG) timing channel~\cite{Srinivas2012} while retaining a
point transmitter, constant diffusivity, and first absorption as the receiver
observable.

The principal notation is summarized in Table~\ref{tab:notation}.
We write $\mathbb P$, $\mathbb E$, and $\operatorname{Var}$ for
probability, expectation, and variance, respectively;
conditioning is indicated by a vertical bar.
The indicator $\mathbf 1_A(x)$ equals one when $x\in A$
and zero otherwise, and $\mathbf 1_{\{E\}}$ similarly indicates
an event $E$.
The notation $\overset{\mathrm d}{=}$ denotes equality in
distribution, and $\delta_x$ denotes a unit point mass at $x$.
We use $(z)_+=\max\{z,0\}$ and $r\wedge s=\min\{r,s\}$.
The notation $a=O(b)$ means that $|a|\leq C|b|$ for some
constant $C$ in the stated limit or scaling regime.

\input{tables/01_notation}

\begin{figure*}[!tp]
  \centering
  \includegraphics[width=0.98\textwidth]{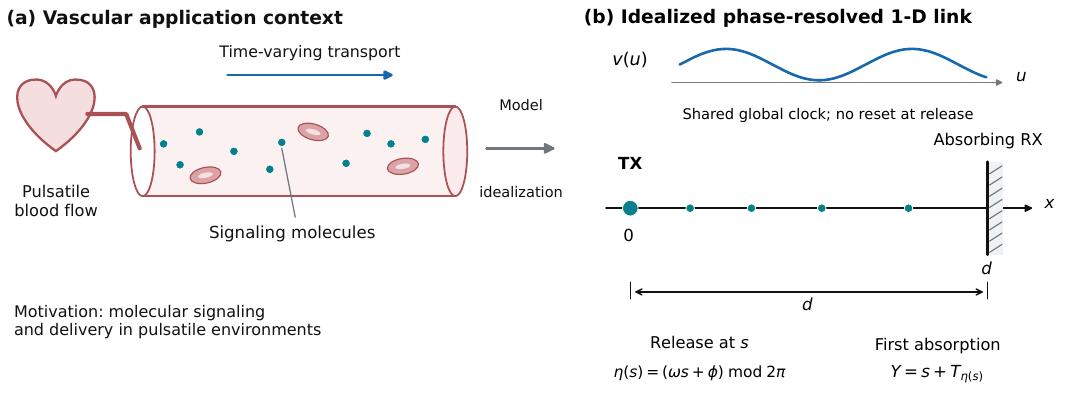}
  \caption{Application context and idealized phase-resolved molecular timing
  channel, adapted from~\cite{Lee2026CIG}. (a) Molecular signaling and
  delivery in pulsatile vascular environments motivate the prescribed
  time-varying transport. (b) The reduced model has a transmitter at $x=0$
  and a perfectly absorbing receiver at $x=d$. The spatially uniform drift
  $v(u)$ follows a shared global clock and is not reset at release. A release
  at time $s$ has phase $\eta(s)=(\omega s+\phi)\bmod 2\pi$ and absolute
  arrival time $Y=s+T_{\eta(s)}$. The waveform and particle positions are
  schematic.}
  \label{fig:phase-resolved-channel-concept}
\end{figure*}

\subsection{Global-clock drift and absorbing link}
\label{subsec:global-clock-model}

The transmitter is located at $x=0$, and the receiver is a perfectly
absorbing boundary at $x=d>0$. Before reception, a molecule moves on
$(-\infty,d)$ with constant diffusivity $D>0$ and spatially uniform drift
\begin{equation}
  v(u)=v_0+a\sin(\omega u+\phi),
  \qquad u\geq0,
  \label{eq:global-drift}
\end{equation}
where $u$ is absolute time, $v_0>0$ is the cycle-mean velocity,
$a\geq0$ is the pulsation amplitude, $\omega>0$ is the angular frequency,
and $\phi$ is the global phase offset.
The parameters $(d,D,v_0,a,\omega,\phi)$ are known, molecules are
noninteracting, and the waveform is prescribed independently
of molecular release.

For a molecule released at absolute time $s\geq0$, let $t\geq0$
denote elapsed propagation time, with $u=s+t$.
The drift--diffusion trajectory satisfies
\begin{equation}
  dX_s(t)=v(s+t)\,dt+\sqrt{2D}\,dW(t),
  \qquad X_s(0)=0,
  \label{eq:released-particle-sde}
\end{equation}
where $W$ is a standard Brownian motion.
The propagation delay is the first-hitting time
\begin{equation}
  T_s=\inf\{t>0:X_s(t)=d\},
  \label{eq:first-hitting-time}
\end{equation}
with $\inf\varnothing=\infty$.
The receiver registers and removes the molecule at this first contact.

In the vascular interpretation of
Fig.~\ref{fig:phase-resolved-channel-concept}, $x$ denotes longitudinal
position along an idealized unbranched transport segment.
For a fixed molecular species under approximately stable medium
conditions, molecular diffusivity can remain effectively constant
while cardiac pulsation varies the advective velocity.
Symeonidis \emph{et al.} use constant molecular diffusivity with
a pulsatile velocity field in their underlying three-dimensional
vascular MC model~\cite[Eq.~(1)]{Symeonidis2026}.

The present one-dimensional model isolates the effects of pulsatile
drift and release timing on first absorption.
Its spatially uniform drift excludes cross-sectional shear and
the associated longitudinal dispersion; $D$ represents molecular
diffusion.
It therefore serves as an idealized vascular reference model.
Application to a specific vessel requires justifying the spatial
reduction and the perfectly absorbing receiver assumption.

\subsection{Release phase and the conditional first-hitting law}
\label{subsec:conditional-fht-law}

The release epoch selects the phase of the global waveform. On the phase
circle $\Torus=\R/(2\pi\mathbb Z)$, define
\begin{equation}
  \eta(s)=(\omega s+\phi)\bmod 2\pi.
  \label{eq:physical-release-phase}
\end{equation}
For a fixed phase $\eta\in\Torus$, the elapsed-time drift is
\begin{equation}
  v_\eta(t)=v_0+a\sin(\eta+\omega t),
  \label{eq:phase-drift}
\end{equation}
so that $v(s+t)=v_{\eta(s)}(t)$. Its cumulative deterministic displacement is
\begin{equation}
  \begin{aligned}
    M_\eta(t)
    &=\int_0^t v_\eta(r)\,dr\\
    &=v_0t+\frac{a}{\omega}
       \left[\cos\eta-\cos(\eta+\omega t)\right].
  \end{aligned}
  \label{eq:physical-accumulated-displacement}
\end{equation}
Thus the phase-indexed trajectory and its first hitting time can be written
as
\begin{align}
  X_\eta(t)&=M_\eta(t)+\sqrt{2D}\,W(t),
  \label{eq:integrated-particle-process}\\
  T_\eta&=\inf\{t>0:X_\eta(t)=d\}.
  \label{eq:fixed-phase-hitting-time}
\end{align}
For a release at $s$, this construction gives $X_s=X_{\eta(s)}$ and
$T_s=T_{\eta(s)}$ when driven by the same Brownian path.

Let $\mu_\eta$ denote the probability law of $T_\eta$ on $(0,\infty]$, and
write its cumulative distribution function (CDF) as
\begin{equation}
  F_\eta(t)=\mu_\eta((0,t])=\Prob(T_\eta\leq t).
  \label{eq:phase-conditioned-law}
\end{equation}
The family $\mathcal C_{\mathrm{FHT}}=\{\mu_\eta:\eta\in\Torus\}$
provides the propagation component of the phase-resolved timing channel.
For the positive-mean periodic drift in~\eqref{eq:global-drift}, each law has
a continuous density $f_\eta$ on $(0,\infty)$ and unit total mass; these
properties are established in Section~\ref{sec:volterra-characterization}.
We extend $f_\eta$ by zero for $t\leq0$.

Let $P=2\pi/\omega$ be the pulsation period. A shift of the release epoch by
an integer number of periods leaves the subsequent drift history unchanged.
Coupling the trajectories with the same Brownian path therefore gives
\begin{equation}
  T_{s+kP}\overset{\mathrm d}{=}T_s,
  \qquad s+kP\geq0,
  \quad k\in\mathbb Z.
  \label{eq:periodic-law}
\end{equation}
It is consequently sufficient to characterize the conditional laws over one
cycle of release phases.

\subsection{Communication transition law}
\label{subsec:communication-transition-law}

Let $S\geq0$ be the release time selected by the communication system. We
assume that $S$ is independent of the post-release Brownian fluctuation;
conditional on $S=s$, the noise in~\eqref{eq:released-particle-sde} therefore
remains a standard Brownian motion with the same law for every $s$. The
absolute arrival time is
\begin{equation}
  Y=S+T_{\eta(S)}.
  \label{eq:phase-indexed-channel}
\end{equation}
For a Borel set $A\subseteq\R$, the communication transition kernel is
\begin{equation}
  \begin{aligned}
    Q(s,A)
    &:=\Prob(Y\in A\mid S=s)\\
    &=\int_{(0,\infty)}
        \mathbf 1_A(s+t)\,\mu_{\eta(s)}(dt),
  \end{aligned}
  \label{eq:communication-kernel-measure}
\end{equation}
and its conditional density is
\begin{equation}
  p_{Y\mid S}(y\mid s)
  =f_{\eta(s)}(y-s)\,\mathbf 1_{\{y>s\}}.
  \label{eq:communication-kernel}
\end{equation}
This expression separates the deterministic shift by $s$ from the
phase-conditioned propagation law. For likelihood-based reception, each
candidate release time selects both a delay argument $y-s$ and a physical
phase $\eta(s)$.

For $A+kP=\{y+kP:y\in A\}$, periodicity implies
\begin{equation}
  Q(s+kP,A+kP)=Q(s,A),
  \qquad s+kP\geq0.
  \label{eq:periodic-transition-kernel}
\end{equation}
When $a=0$, all phase-conditioned laws coincide with the classical
constant-drift IG law~\cite{Srinivas2012}. Under pulsation, the same additive
timing relation remains valid, but the propagation delay generally depends
on the input release time through $\eta(S)$.

\subsection{Release protocols and phase mixtures}
\label{subsec:phase-mixture}

A release protocol induces a probability measure $q$ on $\Torus$, defined
by $q(B)=\Prob(\eta(S)\in B)$ for every Borel set $B\subseteq\Torus$.
The corresponding marginal propagation-delay law is
\begin{equation}
  \mu_q(A)=\int_{\Torus}\mu_\eta(A)\,q(d\eta),
  \label{eq:phase-mixture-measure}
\end{equation}
for Borel $A\subseteq(0,\infty)$. Its CDF and density are
\begin{align}
  F_q(t)&=\int_{\Torus}F_\eta(t)\,q(d\eta),
  \label{eq:general-phase-mixture-cdf}\\
  f_q(t)&=\int_{\Torus}f_\eta(t)\,q(d\eta).
  \label{eq:general-phase-mixture-density}
\end{align}
These formulas cover both discrete phase slots and continuous phase
randomization. For example, $q=\sum_jp_j\delta_{\eta_j}$ gives
$f_q=\sum_jp_jf_{\eta_j}$, where $p_j$ is the probability of selecting phase
$\eta_j$. Uniformly randomized release phase gives
\begin{equation}
  \begin{aligned}
    \overline F(t)&=\frac{1}{2\pi}\int_0^{2\pi}F_\eta(t)\,d\eta,\\
    \overline f(t)&=\frac{1}{2\pi}\int_0^{2\pi}f_\eta(t)\,d\eta.
  \end{aligned}
  \label{eq:uniform-phase-mixture}
\end{equation}

The mixture describes the marginal delay under the specified release
protocol. Phase-aware operation instead uses the conditional law selected
by the release epoch in~\eqref{eq:communication-kernel}. The absolute-arrival
law also retains the release-time shift: it is obtained by averaging
$Q(s,A)$ over the full distribution of $S$.

\subsection{Dimensionless representation}
\label{subsec:dimensionless-model}

The nominal advection time and the dimensionless elapsed time are
\begin{equation}
  t_a=\frac{d}{v_0},
  \qquad \tau=\frac{t}{t_a}.
  \label{eq:nominal-advection-time}
\end{equation}
Introduce the dimensionless groups
\begin{equation}
  \alpha=\frac{a}{v_0},
  \qquad \Omega=\omega t_a,
  \qquad \mathrm{Pe}=\frac{v_0d}{D}.
  \label{eq:dimensionless-groups}
\end{equation}
Here, $\alpha$ is the pulsation-to-mean velocity ratio, $\Omega$ is the
waveform phase advance over one nominal transit time, and $\mathrm{Pe}$ is
the P\'eclet number measuring the strength of advection relative to diffusion
under the convention in~\eqref{eq:dimensionless-groups}; related MC models
also use this ratio to organize flow--diffusion regimes
\cite{Wicke2018DuctFlow}.

Define the scaled position and Brownian motion by
\begin{equation}
  \widetilde X_\eta(\tau)=\frac{X_\eta(t_a\tau)}{d},
  \qquad B(\tau)=\frac{W(t_a\tau)}{\sqrt{t_a}}.
  \label{eq:dimensionless-position}
\end{equation}
Brownian scaling makes $B$ a standard Brownian motion, and
\eqref{eq:released-particle-sde} becomes
\begin{equation}
  \begin{aligned}
    d\widetilde X_\eta(\tau)
    &=\left[1+\alpha\sin(\eta+\Omega\tau)\right]d\tau\\
    &\quad+\sqrt{\frac{2}{\mathrm{Pe}}}\,dB(\tau),
    \qquad \widetilde X_\eta(0)=0.
  \end{aligned}
  \label{eq:dimensionless-sde}
\end{equation}
The corresponding first hitting time and cumulative displacement are
\begin{equation}
  \mathcal T_\eta
  =\inf\{\tau>0:\widetilde X_\eta(\tau)=1\}
  =\frac{T_\eta}{t_a},
  \label{eq:dimensionless-fht}
\end{equation}
\begin{equation}
  \begin{aligned}
    m_\eta(\tau)
    &=\frac{M_\eta(t_a\tau)}{d}\\
    &=\tau+\frac{\alpha}{\Omega}
       \left[\cos\eta-\cos(\eta+\Omega\tau)\right].
  \end{aligned}
  \label{eq:dimensionless-displacement}
\end{equation}

The law of $\mathcal T_\eta$ therefore depends only on
$(\alpha,\Omega,\mathrm{Pe},\eta)$. If two physical systems have the same
values of these four quantities, their scaled trajectories satisfy the same
stochastic equation and have the same first-hitting law. In particular, for
systems $j=1,2$ with nominal advection times $t_{a,j}$,
\begin{equation}
  \frac{T_\eta^{(1)}}{t_{a,1}}
  \overset{\mathrm d}{=}
  \frac{T_\eta^{(2)}}{t_{a,2}}.
  \label{eq:dimensionless-equivalence}
\end{equation}
Writing the common dimensionless CDF and density as
$\widetilde F_\eta$ and $\widetilde f_\eta$, respectively, gives
\begin{align}
  F_\eta^{(j)}(t_{a,j}\tau)&=\widetilde F_\eta(\tau),
  \label{eq:dimensionless-cdf-equivalence}\\
  t_{a,j}f_\eta^{(j)}(t_{a,j}\tau)&=\widetilde f_\eta(\tau).
  \label{eq:dimensionless-density-equivalence}
\end{align}
These relations allow the numerical results to be expressed in a common
parameter space while preserving their physical time scales. For
$\alpha\leq1$, the instantaneous drift is nonnegative throughout the cycle;
for $\alpha>1$, it reverses direction during part of the cycle, producing
transient backflow despite the positive cycle mean.

%% file: tables/01_notation.tex
\begin{table*}[!tp]
  \caption{Principal Notation}
  \label{tab:notation}
  \centering
  \footnotesize
  \renewcommand{\arraystretch}{1.0}
  \setlength{\tabcolsep}{5pt}
  \begin{tabularx}{\textwidth}{@{}
    >{\raggedright\arraybackslash}p{0.28\textwidth}
    >{\raggedright\arraybackslash}X@{}}
    \toprule
    Symbol & Meaning \\
    \midrule
    \multicolumn{2}{@{}l}{\textit{Physical model and release timing}}\\
    $d,\ D$
      & Link distance; constant diffusivity.\\
    $v_0,\ a,\ \omega,\ \phi$
      & Mean drift; amplitude; angular frequency; global phase offset.\\
    $u,\ s,\ t$
      & Absolute time; release epoch; elapsed time ($u=s+t$).\\
    $S,\ Y$
      & Random release and absolute arrival times.\\
    $P,\ \eta(s),\ \mathbb T$
      & Period $2\pi/\omega$; release phase
        $(\omega s+\phi)\bmod 2\pi$; phase circle.\\
    $v_\eta(t),\ M_\eta(t)$
      & Post-release drift and cumulative displacement.\\
    $W(t),\ X_s(t),\ X_\eta(t)$      & Standard Brownian motion; position indexed by release epoch or phase.\\
    \midrule
    \multicolumn{2}{@{}l}{\textit{Conditional channel and scaling}}\\
    $T_s,\ T_\eta$
      & FHT indexed by release epoch or phase.\\
    $\mu_\eta,\ f_\eta,\ F_\eta$
      & Phase-conditioned FHT law, density, and CDF.\\
    $Q(s,A),\ p_{Y\mid S}(y\mid s)$
      & Release-to-arrival transition kernel and density.\\
    $q,\ \mu_q,\ f_q,\ F_q$
      & Protocol-induced phase law; mixture delay law, density, and CDF.\\
    $\overline f,\ \overline F$
      & Density and CDF under uniform phase mixing.\\
    $t_a,\ \tau,\ \mathcal T_\eta$
      & Advection time $d/v_0$; scaled time $t/t_a$ and FHT $T_\eta/t_a$.\\
    $\alpha,\ \Omega,\ \mathrm{Pe}$
      & Dimensionless amplitude $a/v_0$, frequency $\omega t_a$,
        and P\'eclet number $v_0d/D$.\\
    $\widetilde X_\eta,\ B,\ m_\eta$
      & Scaled position, standard Brownian motion, and cumulative displacement.\\
    $\widetilde f_\eta,\ \widetilde F_\eta$
      & Dimensionless FHT density and CDF.\\
    \midrule
    \multicolumn{2}{@{}l}{\textit{Exact characterization and C-IG approximation}}\\
    $b_\eta(t),\ \beta_\eta(\tau)$
      & Moving boundary $[d-M_\eta(t)]/\sqrt{2D}$ and its scaled form.\\
    $p_t(x),\ h_\eta(t),\ K_\eta(t,u)$
      & Brownian transition density; Volterra forcing and kernel.\\
    $\widetilde h_\eta,\ \widetilde K_\eta,\
     \widetilde{\mathcal K}_\eta$
      & Dimensionless Volterra forcing, kernel, and integral operator.\\
    $f_{\mathrm{IG}},\ F_{\mathrm{IG}}$
      & Constant-drift IG density and CDF.\\
    $L_\eta(t),\ f_{\mathrm{BM}}(t)$
      & Girsanov likelihood ratio; drift-free FHT density.\\
    $E_\eta(t),\ \Gamma_\eta(t)$
      & Cumulative-drift exponential core; omitted drift-variation term.\\
    $\ell_D(t),\ \delta_\eta(t)$
      & Diffusion scale $\sqrt{2Dt}$; relative drift deviation
        $[v_\eta(t)-v_0]/v_0$.\\
    $F_{\mathrm{mean},\eta},\
     \mathcal F_{\eta,t},\
     F_{\mathrm{smooth},\eta}$
      & Effective coordinate; auxiliary Gaussian coordinate;
        positive baseline-preserving prefactor (not CDFs).\\
    $\mathcal P_\rho(z)$
      & Gaussian positive-part mean
        $\mathbb E[(z+\rho\xi)_+]$, $\xi\sim\mathcal N(0,1)$.\\
    $\widetilde g_\eta,\ Z_\eta,\ g_\eta,\ G_\eta$
      & Unnormalized C-IG kernel, total mass, normalized density, and CDF.\\
    \midrule
    \multicolumn{2}{@{}l}{\textit{Numerical evaluation and phase diagnostics}}\\
    $\Delta t,\ N_t$
      & Time step and number of steps.\\
    $T_{\mathrm h},\ \tau_{\mathrm h},\
     t_{\max},\ t_{\mathrm{cmp}},\ t_Z$
      & Analysis, scaled analysis, evaluation, comparison,
        and normalization horizons.\\
    $\widehat f_\eta,\ \widehat F_\eta,\
     \widehat Z_\eta,\ \widehat g_\eta$
      & Numerical reference density/CDF; C-IG mass and normalized density.\\
    $\widehat v_\eta^{\mathrm d},\ J_\eta(v)$
      & Drift-only IG fit and grid-based CDF objective
        (physical $d,D$ fixed).\\
    $\mathcal D_{t_{\mathrm{cmp}}},\
     \mathcal L^1_{t_{\mathrm{cmp}}}$
      & Supremum CDF error and $L^1$ density error on $[0,t_{\mathrm{cmp}}]$.\\
    $\mathcal D_{\max},\ \mathcal L^1_{\max},\ P_{\mathrm{match}}$
      & Maximum sampled-phase errors; peak-count agreement fraction.\\
    $\epsilon_{\mathrm{mass},\eta},\
     \epsilon_{-,\eta}$
      & Raw Volterra terminal-mass discrepancy and integrated negative part.\\
    $R_{\mathrm{between}},\
     \widehat\Delta_{\mathrm{phase},48}$
      & Between-phase variance fraction; maximum pairwise CDF gap
        on the 48-phase/time grid.\\
    \bottomrule
  \end{tabularx}
\end{table*}

%% file: sections/03_exact_characterization.tex
\section{Exact First-hitting Characterization}
\label{sec:volterra-characterization}

The phase-conditioned density $f_\eta$ specifies the
release-to-arrival transition law introduced in
Section~\ref{sec:phase-resolved-channel}.
To characterize this density, we transform the absorbing-link
problem into Brownian first passage through a moving boundary
and apply classical first-passage
identities~\cite{Fortet1943,Durbin1985,Buonocore1987,Peskir2002}.
The resulting Volterra equation retains the first-contact
constraint and provides the exact characterization underlying
the numerical benchmark in our conference
manuscript~\cite{Lee2026CIG}.
Here, we develop its mathematical properties and their
implications for the phase-resolved channel, with proofs
and supporting bounds collected in
Appendices~\ref{app:volterra-proof}--%
\ref{app:stationary-ig-proof}.

A related integral-equation approach was developed by
Zoofaghari et al.~\cite{Zoofaghari2021}, who used Green's
second identity to obtain a frequency-domain boundary integral
equation for diffusion \textit{without drift} in complex geometries
with reactive boundaries.
Both approaches represent boundary effects using known
propagators, but address different difficulties:
their formulation accommodates spatial complexity, whereas
ours adopts a one-dimensional absorbing geometry to resolve
the first-hitting law under \textit{time-varying drift}.
Their published frequency-domain procedure requires
reformulation for pulsatile drift, which generally couples
frequencies; removing the drift by a coordinate transformation
instead produces a moving boundary.
The time-domain Volterra representation used here retains
the resulting dependence on both time arguments.

\subsection{Moving-boundary representation}
\label{subsec:moving-boundary}

Fix $\eta\in\Torus$ and retain $v_\eta$ and $M_\eta$ from
Section~\ref{sec:phase-resolved-channel}. Define
\begin{equation}
  b_\eta(t)=\frac{d-M_\eta(t)}{\sqrt{2D}}.
  \label{eq:physical-boundary}
\end{equation}
The trajectory in~\eqref{eq:integrated-particle-process} satisfies
\begin{equation}
  X_\eta(t)-d
  =\sqrt{2D}\,[W(t)-b_\eta(t)].
  \label{eq:pathwise-boundary-identity}
\end{equation}
Since the paths are continuous and $b_\eta(0)>0$,
\begin{equation}
  \begin{aligned}
    T_\eta
    &=\inf\{t>0:W(t)=b_\eta(t)\}\\
    &=\inf\{t>0:W(t)\geq b_\eta(t)\}.
  \end{aligned}
  \label{eq:curved-boundary-fht}
\end{equation}
Thus, fixed-receiver absorption under pulsatile drift becomes
standard Brownian first passage through a moving boundary.
Its derivatives are
\begin{equation}
  \begin{aligned}
    b_\eta'(t)&=-\frac{v_\eta(t)}{\sqrt{2D}},\\
    b_\eta''(t)&=
      -\frac{a\omega\cos(\eta+\omega t)}{\sqrt{2D}}.
  \end{aligned}
  \label{eq:boundary-derivatives}
\end{equation}
Cumulative drift determines the boundary position, while
instantaneous drift and its variation determine the slope
and curvature. The representation remains valid during
transient backflow.

\subsection{Fortet identities and the Volterra equation}
\label{subsec:fortet-volterra}

The smooth boundary and its positive initial separation imply
a continuous first-passage density on $(0,\infty)$
\cite{Peskir2002,Lehmann2002}. Let
\begin{equation}
  p_t(x)=\frac{1}{\sqrt{2\pi t}}
  \exp\!\left(-\frac{x^2}{2t}\right),
  \qquad t>0,
  \label{eq:brownian-transition-density}
\end{equation}
Hereafter, $\varphi$ and $\Phi$ denote the standard normal
density and CDF, respectively, and
$\overline\Phi=1-\Phi$ denotes the corresponding upper-tail
function. The density symbol $\varphi$ is distinct from the
global phase offset $\phi$ in the drift waveform.
Decomposition at the first boundary contact gives the Fortet identities
\begin{equation}
  \overline\Phi\!\left(\frac{b_\eta(t)}{\sqrt t}\right)
  =
  \int_0^t
    \overline\Phi\!\left(
      \frac{b_\eta(t)-b_\eta(u)}{\sqrt{t-u}}
    \right)f_\eta(u)\,du
  \label{eq:fortet-tail-identity}
\end{equation}
and
\begin{equation}
  p_t\!\left(b_\eta(t)\right)
  =
  \int_0^t
    p_{t-u}\!\left(b_\eta(t)-b_\eta(u)\right)
    f_\eta(u)\,du.
  \label{eq:fortet-density-identity}
\end{equation}
Every unrestricted Brownian path above the boundary at time
$t$ must have first crossed it at some $u\leq t$; these
identities integrate over that first-contact time.
For a compact domain $I$, $C(I)$ denotes the space of
continuous real-valued functions on $I$; on an interval,
$C^k(I)$ denotes functions with continuous derivatives
through order $k$.
Combining them yields the following second-kind formulation.

\begin{theorem}[Phase-resolved Volterra characterization]
\label{thm:phase-resolved-volterra}
For every $\eta\in\Torus$,
\begin{equation}
  f_\eta(t)
  =h_\eta(t)+\int_0^t K_\eta(t,u)f_\eta(u)\,du,
  \qquad t>0,
  \label{eq:volterra-second-kind}
\end{equation}
where
\begin{equation}
  h_\eta(t)
  =
  \left[\frac{b_\eta(t)}{t}-b_\eta'(t)\right]
  p_t\!\left(b_\eta(t)\right)
  \label{eq:volterra-forcing}
\end{equation}
and
\begin{equation}
  \begin{aligned}
    K_\eta(t,u)
    &=
    \left[
      b_\eta'(t)
      -\frac{b_\eta(t)-b_\eta(u)}{t-u}
    \right]\\
    &\quad\times
    p_{t-u}\!\left(b_\eta(t)-b_\eta(u)\right),
    \qquad u<t.
  \end{aligned}
  \label{eq:volterra-kernel}
\end{equation}
For every finite $T_{\mathrm h}>0$, the extensions
\begin{equation}
  h_\eta(0)=f_\eta(0)=0,
  \qquad K_\eta(t,t)=0
  \label{eq:volterra-continuous-extensions}
\end{equation}
make $h_\eta,f_\eta\in C([0,T_{\mathrm h}])$ and
$K_\eta\in C(\Delta_{T_{\mathrm h}})$, where
$\Delta_{T_{\mathrm h}}
=\{(t,u):0\leq u\leq t\leq T_{\mathrm h}\}$.
The density $f_\eta$ is the unique continuous solution of
\eqref{eq:volterra-second-kind} on this horizon.
\end{theorem}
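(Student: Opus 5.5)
The plan is to derive \eqref{eq:volterra-second-kind} by differentiating the tail-level Fortet identity \eqref{eq:fortet-tail-identity} in $t$, and then to use the density-level identity \eqref{eq:fortet-density-identity} to symmetrize the kernel. We fix $\eta$ and a horizon $T_{\mathrm h}$, and write $b=b_\eta$, $f=f_\eta$. We take from the moving-boundary literature cited before the statement that $f$ exists, is continuous on $(0,T_{\mathrm h}]$, and is a subdensity. We also use that $b\in C^\infty$ with $b(0)=d/\sqrt{2D}>0$.

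\textbf{Step 1: Fortet identities.} We first justify both identities by first-contact decomposition. Conditioning on $T=u$ and using the strong Markov property gives $\Prob(W(t)\ge x)=\int_0^t\overline\Phi((x-b(u))/\sqrt{t-u})f(u)\,du$ for every $x\ge b(t)$.

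\begin{itemize}
\item Setting $x=b(t)$ gives \eqref{eq:fortet-tail-identity}.
\item Differentiating in $x>b(t)$ and letting $x\downarrow b(t)$ gives \eqref{eq:fortet-density-identity}. The limit is justified by dominated convergence after splitting the integral at $t/2$. Near $u=t$ we use the bound $p_{t-u}\le[2\pi(t-u)]^{-1/2}$ together with local boundedness of $f$.
\end{itemize}

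\textbf{Step 2: differentiate the tail identity.} Write $H(t)=\overline\Phi(b(t)/\sqrt t)$, so that $H'(t)=[b(t)/(2t)-b'(t)]p_t(b(t))$. Put $\Psi(t,u)=\overline\Phi((b(t)-b(u))/\sqrt{t-u})$. Since $(b(t)-b(u))/\sqrt{t-u}\to0$ as $u\uparrow t$, the boundary term is $\Psi(t,t^-)f(t)=\tfrac12 f(t)$. The derivative is
\[
\partial_t\Psi(t,u)=-\Bigl[b'(t)-\tfrac{b(t)-b(u)}{2(t-u)}\Bigr]p_{t-u}(b(t)-b(u)),
\]
which has only a $(t-u)^{-1/2}$ singularity, so Leibniz differentiation under the integral is legitimate. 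Equating the two sides and solving for $f(t)$ gives an intermediate equation with coefficients $b/t-2b'$ and $2b'(t)-\frac{b(t)-b(u)}{t-u}$. Next we add $b'(t)$ times the density identity \eqref{eq:fortet-density-identity}. This turns the forcing coefficient into $b/t-b'$ and the kernel coefficient into $b'(t)-\frac{b(t)-b(u)}{t-u}$, which are exactly \eqref{eq:volterra-forcing} and \eqref{eq:volterra-kernel}.

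\textbf{Step 3: regularity.} We treat the forcing, the kernel and the density in turn.

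\begin{itemize}
\item \emph{Forcing.} For small $t$ we have $b\ge b(0)/2$, so $|h(t)|\lesssim(t^{-3/2}+t^{-1/2})e^{-c/t}\to0$. Hence $h(0)=0$ is a continuous extension.
\item \emph{Kernel.} This is the main obstacle, the secant--tangent cancellation. By Taylor's theorem, $|b'(t)-\frac{b(t)-b(u)}{t-u}|\le\frac12\sup|b''|\,(t-u)$. Combined with $p_{t-u}\le[2\pi(t-u)]^{-1/2}$ this gives $|K(t,u)|\le C\sqrt{t-u}$. The kernel is continuous off the diagonal, so this bound yields a continuous extension to the closed triangle with $K(t,t)=0$. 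Here $\sup|b''|\le a\omega/\sqrt{2D}$.
\item \emph{Density.} Inserting the subdensity bound into the Volterra equation gives $|f(t)|\le|h(t)|+C\sqrt t$. Hence $f(0)=0$ is a continuous extension.
\end{itemize}

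\textbf{Step 4: uniqueness.} Let $g_1,g_2\in C([0,T_{\mathrm h}])$ be two continuous solutions and set $z=g_1-g_2$. Then $|z(t)|\le C_K\int_0^t|z|$, where $C_K=\max_{\Delta_{T_{\mathrm h}}}|K|<\infty$. Gronwall's inequality gives $z\equiv0$. Existence is supplied by $f$ itself, which completes the argument.
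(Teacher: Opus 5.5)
Your proposal is correct and follows the paper's proof essentially step for step. It uses the same ingredients: first-contact decomposition with the $t/2$ split for both Fortet identities, differentiation of the tail identity with the $\tfrac12 f(t)$ boundary term, adding $b'(t)$ times the density identity to reach the stated forcing and kernel, the secant--tangent bound $|K|\le \tfrac12\sup|b''|\,[2\pi(t-u)]^{-1/2}(t-u)$ for the continuous extensions, and Gronwall for uniqueness. The only cosmetic difference is that you obtain the secant--tangent estimate from Taylor's theorem rather than the paper's double-integral representation, which gives the same constant.
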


\begin{IEEEproof}
See Appendix~\ref{app:volterra-proof}.
\end{IEEEproof}

For evaluation in physical variables, write
$\Delta M_\eta(t,u)=M_\eta(t)-M_\eta(u)$ and define
\begin{equation}
  \overline v_\eta(u,t)
  =\frac{\Delta M_\eta(t,u)}{t-u},
  \qquad u<t.
  \label{eq:interval-average-drift}
\end{equation}
Then
\begin{equation}
  \begin{aligned}
    h_\eta(t)
    &=
    \frac{d-M_\eta(t)+t v_\eta(t)}
         {\sqrt{4\pi Dt^3}}\\
    &\quad\times
    \exp\!\left[-\frac{(d-M_\eta(t))^2}{4Dt}\right],
  \end{aligned}
  \label{eq:physical-volterra-forcing}
\end{equation}
and
\begin{equation}
  \begin{aligned}
    K_\eta(t,u)
    &=
    \frac{\overline v_\eta(u,t)-v_\eta(t)}
         {\sqrt{4\pi D(t-u)}}\\
    &\quad\times
    \exp\!\left[
      -\frac{\Delta M_\eta(t,u)^2}{4D(t-u)}
    \right].
  \end{aligned}
  \label{eq:physical-volterra-kernel}
\end{equation}
The forcing depends on the boundary position and tangent at
$t$. The integral term accounts for earlier first contacts,
weighted by the mismatch between current and interval-average
drift. Their sum enforces the first-arrival condition.
Individually, $h_\eta$ need not be nonnegative or normalized,
and $K_\eta$ can change sign.

\subsection{Kernel regularity and dimensionless structure}
\label{subsec:kernel-regularity}

\begin{lemma}[Curvature-controlled kernel bound]
\label{lem:curvature-kernel-bound}
Let $b\in C^2([0,T_{\mathrm h}])$, let $K_b$ be the kernel
in~\eqref{eq:volterra-kernel} with $b_\eta$ replaced by $b$,
and set $L_b=\sup_{0\leq r\leq T_{\mathrm h}}|b''(r)|$.
Then
\begin{equation}
  |K_b(t,u)|
  \leq\frac{L_b}{2\sqrt{2\pi}}\sqrt{t-u},
  \qquad 0\leq u<t\leq T_{\mathrm h}.
  \label{eq:general-kernel-curvature-bound}
\end{equation}
For the sinusoidal channel, this gives
\begin{equation}
  |K_\eta(t,u)|
  \leq\frac{a\omega}{4\sqrt{\pi D}}\sqrt{t-u}.
  \label{eq:physical-kernel-curvature-bound}
\end{equation}
\end{lemma}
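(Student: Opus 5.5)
The kernel in~\eqref{eq:volterra-kernel} is a product of two factors, and the plan is to bound each separately. The first factor is the secant--tangent mismatch $b'(t)-[b(t)-b(u)]/(t-u)$. The second is the Gaussian factor $p_{t-u}(b(t)-b(u))$. The point is that the mismatch vanishes to first order in $t-u$ when $b\in C^2$. This cancels the $(t-u)^{-1/2}$ singularity of the Gaussian factor and leaves a net $\sqrt{t-u}$.

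\textbf{Step 1 (mismatch factor).} For $0\leq u<t\leq T_{\mathrm h}$, I will write the secant as an average of $b'$ over $[u,t]$:
\[
b'(t)-\frac{b(t)-b(u)}{t-u}=\frac{1}{t-u}\int_u^t\bigl[b'(t)-b'(r)\bigr]\,dr .
\]
Then I apply the fundamental theorem of calculus once more, $b'(t)-b'(r)=\int_r^t b''(z)\,dz$, which gives $|b'(t)-b'(r)|\leq L_b(t-r)$. Integrating yields $\frac{L_b}{t-u}\int_u^t(t-r)\,dr=\frac{L_b}{2}(t-u)$. Only continuity of $b''$ on the compact interval is used; this is what makes $L_b$ finite.

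\textbf{Step 2 (Gaussian factor).} I will use the crude bound $p_{t-u}(x)\leq[2\pi(t-u)]^{-1/2}$. This holds uniformly in $x$ because the exponential is at most one. Multiplying the two estimates gives $|K_b(t,u)|\leq \frac{L_b}{2}(t-u)\cdot[2\pi(t-u)]^{-1/2}=\frac{L_b}{2\sqrt{2\pi}}\sqrt{t-u}$, which is~\eqref{eq:general-kernel-curvature-bound}.

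\textbf{Step 3 (sinusoidal specialization).} By~\eqref{eq:boundary-derivatives}, $b_\eta$ is smooth with $|b_\eta''(r)|=a\omega|\cos(\eta+\omega r)|/\sqrt{2D}\leq a\omega/\sqrt{2D}$. Hence $L_{b_\eta}\leq a\omega/\sqrt{2D}$, uniformly in $\eta$ and in $T_{\mathrm h}$. Substituting into Step 2 gives the constant $\frac{a\omega}{2\sqrt{2\pi}\sqrt{2D}}=\frac{a\omega}{4\sqrt{\pi D}}$, which is~\eqref{eq:physical-kernel-curvature-bound}.

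None of the steps is a genuine obstacle. The only point needing care is to avoid a sign-sensitive Taylor expansion with a remainder at an unspecified point. The integral representation in Step 1 sidesteps this and gives the sharp factor $1/2$ directly.
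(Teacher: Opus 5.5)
Your proposal is correct and follows the paper's proof in Appendix~\ref{app:kernel-bound-proof} essentially step for step. The paper also writes the secant--tangent mismatch as an average integral, bounds it by $\frac{L_b}{2}(t-u)$, multiplies by $p_{t-u}(x)\leq[2\pi(t-u)]^{-1/2}$, and then substitutes the curvature bound $a\omega/\sqrt{2D}$ from~\eqref{eq:boundary-derivatives} to get the constant $a\omega/(4\sqrt{\pi D})$.
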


\begin{IEEEproof}
See Appendix~\ref{app:kernel-bound-proof}.
\end{IEEEproof}

The secant--tangent cancellation therefore removes the
diagonal singularity of the Gaussian transition density,
giving $K_\eta(t,u)\to0$ as $u\uparrow t$.

Under the scaling in Section~\ref{sec:phase-resolved-channel}, the corresponding boundary is
\begin{equation}
  \beta_\eta(\tau)
  =\frac{b_\eta(t_a\tau)}{\sqrt{t_a}}
  =\sqrt{\frac{\mathrm{Pe}}{2}}\,
    [1-m_\eta(\tau)].
  \label{eq:dimensionless-boundary}
\end{equation}
The density of $\mathcal T_\eta$ satisfies
\begin{equation}
  \widetilde f_\eta(\tau)
  =
  \widetilde h_\eta(\tau)
  +\int_0^\tau
    \widetilde K_\eta(\tau,\xi)
    \widetilde f_\eta(\xi)\,d\xi,
  \label{eq:dimensionless-volterra-equation}
\end{equation}
with
\begin{equation}
  \begin{aligned}
    \widetilde h_\eta(\tau)&=t_a h_\eta(t_a\tau),\\
    \widetilde K_\eta(\tau,\xi)
      &=t_a K_\eta(t_a\tau,t_a\xi).
  \end{aligned}
  \label{eq:dimensionless-volterra-coefficients}
\end{equation}
Since
\begin{equation}
  \beta_\eta''(\tau)
  =-\sqrt{\frac{\mathrm{Pe}}{2}}\,
    \alpha\Omega\cos(\eta+\Omega\tau),
  \label{eq:dimensionless-boundary-curvature}
\end{equation}
the kernel bound becomes
\begin{equation}
  |\widetilde K_\eta(\tau,\xi)|
  \leq
  \frac{\alpha\Omega\sqrt{\mathrm{Pe}}}{4\sqrt{\pi}}
  \sqrt{\tau-\xi}.
  \label{eq:dimensionless-kernel-curvature-bound}
\end{equation}
On $C([0,\tau_{\mathrm h}])$, use the supremum norm
$\|g\|_\infty:=\sup_{0\leq\tau\leq\tau_{\mathrm h}}|g(\tau)|$.
The dimensionless Volterra operator is
$(\widetilde{\mathcal K}_\eta g)(\tau)
:=\int_0^\tau\widetilde K_\eta(\tau,\xi)g(\xi)\,d\xi$.
Its induced operator norm is
\begin{equation}
  \|\widetilde{\mathcal K}_\eta\|_{\infty\to\infty}
  :=\sup_{\substack{g\in C([0,\tau_{\mathrm h}])\\
                   \|g\|_\infty\leq1}}
       \|\widetilde{\mathcal K}_\eta g\|_\infty.
\end{equation}
The subscript $\infty\to\infty$ indicates that both the
input and output are measured in the supremum norm.
Integration of the kernel bound gives
\begin{equation}
  \|\widetilde{\mathcal K}_\eta\|_{\infty\to\infty}
  \leq
  \frac{\alpha\Omega\sqrt{\mathrm{Pe}}}{6\sqrt{\pi}}
  \tau_{\mathrm h}^{3/2}.
  \label{eq:dimensionless-operator-bound}
\end{equation}
This bound is uniform over release phase and controls the
finite-horizon Volterra integral operator; its derivation is in
Appendix~\ref{app:kernel-bound-proof}.

\subsection{Properness and moment identities}
\label{subsec:properness-moments}

The oscillatory displacement remains bounded:
\begin{equation}
  \begin{aligned}
    R_\eta(t)&:=M_\eta(t)-v_0t\geq-C_\eta,\\
    C_\eta&=\frac{a}{\omega}(1-\cos\eta)
             \leq\frac{2a}{\omega}.
  \end{aligned}
  \label{eq:pulsatile-displacement-bound}
\end{equation}
On the same Brownian path, define
\begin{equation}
  T_\eta^+
  =
  \inf\{t>0:v_0t+\sqrt{2D}\,W(t)=d+C_\eta\}.
  \label{eq:comparison-hitting-time}
\end{equation}
At $T_\eta^+$, the pulsatile trajectory has reached at least
$d$, so continuity gives $T_\eta\leq T_\eta^+$.
This comparison with a positive-drift IG hitting time yields
the following result.

\begin{proposition}[Properness and exponential moments]
\label{prop:properness}
Under the assumptions of Section~\ref{sec:phase-resolved-channel},
$\Prob(T_\eta<\infty)=1$ for every $\eta\in\Torus$.
For $0\leq\theta\leq v_0^2/(4D)$,
\begin{equation}
  \mathbb E[e^{\theta T_\eta}]
  \leq
  \exp\!\left\{
    \frac{d+C_\eta}{2D}
    \left[v_0-\sqrt{v_0^2-4D\theta}\right]
  \right\}.
  \label{eq:exponential-moment-bound}
\end{equation}
All polynomial moments are finite, with bounds uniform
over release phase.
\end{proposition}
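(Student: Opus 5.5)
The plan is to prove the statement through the pathwise comparison already set up before it. First, the lower bound \eqref{eq:pulsatile-displacement-bound} follows from \eqref{eq:physical-accumulated-displacement}, since $\cos(\eta+\omega t)\leq 1$ gives $R_\eta(t)\geq-\frac{a}{\omega}(1-\cos\eta)$. Define $U(t)=v_0t+\sqrt{2D}\,W(t)$ on the same Brownian path. Because $v_0>0$, the strong law for Brownian motion gives $U(t)\to\infty$ almost surely, so $T_\eta^+<\infty$ almost surely. At $T_\eta^+$ we have
\[
X_\eta(T_\eta^+)=U(T_\eta^+)+R_\eta(T_\eta^+)\geq d+C_\eta-C_\eta=d .
\]
Since $X_\eta(0)=0<d$ and paths are continuous, the intermediate value theorem gives $T_\eta\leq T_\eta^+$ pathwise. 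This yields $\Prob(T_\eta<\infty)=1$.

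Next, I would use the known inverse-Gaussian law of $T_\eta^+$: it is the first passage of Brownian motion with drift $v_0$ and diffusivity $D$ to level $L_\eta=d+C_\eta$. Its density is $L_\eta(4\pi Dt^3)^{-1/2}\exp[-(L_\eta-v_0t)^2/(4Dt)]$, as in \cite{Srinivas2012}. The Laplace transform can be computed two ways. One is the standard integral $\int_0^\infty t^{-3/2}e^{-A/t-Bt}\,dt=\sqrt{\pi/A}\,e^{-2\sqrt{AB}}$ with $A=L_\eta^2/(4D)$ and $B=v_0^2/(4D)-\theta\geq0$. The other is a martingale route: apply optional stopping to $\exp(\lambda U(t)-(\lambda v_0+D\lambda^2)t)$ and solve $\lambda v_0+D\lambda^2=\theta$ for its smaller root. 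Either gives $\mathbb E[e^{\theta T^+_\eta}]=\exp\{\frac{L_\eta}{2D}[v_0-\sqrt{v_0^2-4D\theta}]\}$. Monotonicity of $x\mapsto e^{\theta x}$ for $\theta\geq0$ then transfers the bound to $T_\eta$, which is \eqref{eq:exponential-moment-bound}.

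The step needing care is the endpoint $\theta=v_0^2/(4D)$. There $B=0$ and the integral becomes $\int t^{-3/2}e^{-A/t}\,dt$, which is still finite. I would handle it either by that direct evaluation or by monotone convergence as $\theta\uparrow v_0^2/(4D)$, since the right side is continuous there. If the martingale route is used, the optional stopping must be justified through the truncations $T^+\wedge n$ with dominated convergence. This works because $U\leq L_\eta$ before the hitting time, so the exponential martingale is bounded for $\lambda>0$.

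Finally, for polynomial moments I would fix $\theta_0=v_0^2/(4D)$ and use $t^n\leq n!\,\theta_0^{-n}e^{\theta_0 t}$. This gives $\mathbb E[T_\eta^n]\leq n!\,\theta_0^{-n}\exp\{(d+C_\eta)v_0/(2D)\}$. Uniformity over $\eta$ follows from $C_\eta\leq 2a/\omega$, which replaces $d+C_\eta$ by $d+2a/\omega$.
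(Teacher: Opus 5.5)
Your proof is correct and follows the paper's argument essentially step for step: the same comparison time $T_\eta^+$ at level $d+C_\eta$, the same explicit inverse-Gaussian integral with the endpoint $\theta=v_0^2/(4D)$ handled by direct evaluation or monotone convergence, and the same bound $t^n\leq n!\,\theta_0^{-n}e^{\theta_0 t}$ with uniformity from $C_\eta\leq 2a/\omega$. The only slip is in your optional martingale aside, where the sign is flipped: to get $\mathbb{E}[e^{\theta T_\eta^+}]$ with $\theta>0$ you need $D\lambda^2+v_0\lambda=-\theta$ and its larger root $\lambda=(-v_0+\sqrt{v_0^2-4D\theta})/(2D)\leq 0$, for which the stopped martingale $\exp\{\lambda U(t\wedge T_\eta^+)+\theta(t\wedge T_\eta^+)\}$ is not bounded, so dominated convergence must be replaced by Fatou's lemma (which still yields exactly the needed upper bound), and the integral computation, as in the paper, is the cleaner route.
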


\begin{IEEEproof}
See Appendix~\ref{app:properness-proof}.
\end{IEEEproof}

Consequently,
\begin{equation}
  \int_0^\infty f_\eta(t)\,dt=1
  \qquad\text{for every }\eta\in\Torus.
  \label{eq:fht-unit-mass}
\end{equation}
Transient backflow redistributes arrival probability across
time without creating permanent nonarrival. The comparison
also applies to continuous periodic drift with positive
cycle mean.

The hitting condition and stopped Brownian moment identities
further imply
\begin{equation}
  \int_0^\infty
    [d-M_\eta(t)]f_\eta(t)\,dt=0
  \label{eq:first-stopped-moment-identity}
\end{equation}
and
\begin{equation}
  \int_0^\infty
    \left([d-M_\eta(t)]^2-2Dt\right)
    f_\eta(t)\,dt=0.
  \label{eq:second-stopped-moment-identity}
\end{equation}
Their integrability and stopping-time justification are given
in Appendix~\ref{app:properness-proof}. Together with the
Fortet identity, they provide checks on numerical
channel-law construction.

\subsection{Constant-drift inverse-Gaussian recovery}
\label{subsec:ig-recovery}

\begin{corollary}[Vanishing kernel and IG recovery]
\label{cor:stationary-ig-recovery}
For $b\in C^1([0,T_{\mathrm h}])$ with $T_{\mathrm h}>0$,
\begin{equation}
  K_b\equiv0
  \quad\Longleftrightarrow\quad
  b\text{ is affine},
  \label{eq:zero-kernel-iff-affine}
\end{equation}
where the kernel identity holds for all
$0\leq u<t\leq T_{\mathrm h}$.
Within the sinusoidal family with $\omega>0$, this occurs
if and only if $a=0$. All phase-conditioned densities
then reduce to
\begin{equation}
  f_{\mathrm{IG}}(t)
  =
  \frac{d}{\sqrt{4\pi Dt^3}}
  \exp\!\left[-\frac{(d-v_0t)^2}{4Dt}\right],
  \qquad t>0.
  \label{eq:stationary-ig-recovery}
\end{equation}
\end{corollary}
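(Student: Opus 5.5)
The proof splits into three parts: the kernel characterization of affine boundaries, its specialization to the sinusoidal family, and the reduction of the Volterra equation of Theorem~\ref{thm:phase-resolved-volterra} to its forcing term.

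\emph{Affine boundaries.} For the forward direction, if $b(t)=c_0+c_1t$, then $b'(t)=c_1$ equals every secant slope $[b(t)-b(u)]/(t-u)$. The bracket in \eqref{eq:volterra-kernel} therefore vanishes identically, so $K_b\equiv0$. For the converse, note that the Gaussian factor $p_{t-u}(\cdot)$ is strictly positive for $u<t$. Hence $K_b(t,u)=0$ forces the secant--tangent identity
\[
b'(t)=\frac{b(t)-b(u)}{t-u}
\]
for all $0\le u<t\le T_{\mathrm h}$. Fixing $t=T_{\mathrm h}$ and solving for $b(u)$ gives
\[
b(u)=b(T_{\mathrm h})-b'(T_{\mathrm h})(T_{\mathrm h}-u)
\]
on $[0,T_{\mathrm h})$. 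Continuity of $b$ extends this to the endpoint, so $b$ is affine. Only $C^1$ regularity is used here, matching the hypothesis.

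\emph{Sinusoidal family.} A $C^2$ function is affine on an interval if and only if its second derivative vanishes there. By \eqref{eq:boundary-derivatives},
\[
b_\eta''(t)=-\frac{a\omega}{\sqrt{2D}}\cos(\eta+\omega t).
\]
Since $\omega>0$, the map $t\mapsto\cos(\eta+\omega t)$ has only isolated zeros and so cannot vanish on an interval of positive length. Hence $b_\eta''\equiv0$ on $[0,T_{\mathrm h}]$ exactly when $a=0$, and this holds for every $\eta$ and every horizon.

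\emph{IG recovery.} When $a=0$, we have $b_\eta(t)=(d-v_0t)/\sqrt{2D}$ and $b_\eta'=-v_0/\sqrt{2D}$, so
\[
\frac{b_\eta(t)}{t}-b_\eta'(t)=\frac{d}{\sqrt{2D}\,t}.
\]
Inserting this into \eqref{eq:volterra-forcing} and expanding $p_t$ gives
\[
h_\eta(t)=\frac{d}{\sqrt{4\pi Dt^3}}\exp\!\left[-\frac{(d-v_0t)^2}{4Dt}\right],
\]
which agrees with the physical form \eqref{eq:physical-volterra-forcing} since $M_\eta(t)=v_0t$. Because $K_\eta\equiv0$, the Volterra equation \eqref{eq:volterra-second-kind} reduces to $f_\eta=h_\eta$ on every finite horizon. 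The uniqueness statement of Theorem~\ref{thm:phase-resolved-volterra} confirms that this is the first-passage density, and letting $T_{\mathrm h}$ be arbitrary covers all $t>0$.

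There is no real obstacle in this argument. The only point needing care is the converse direction, where one must invoke strict positivity of the Gaussian factor and the endpoint continuity extension so that only $C^1$ regularity of $b$ is required.
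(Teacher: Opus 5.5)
Your proposal is correct and follows essentially the same route as the paper's proof in Appendix~\ref{app:stationary-ig-proof}. Both use the secant--tangent identity from strict positivity of the Gaussian factor, reconstruct $b$ at $t=T_{\mathrm h}$ with a continuity extension, use that $\cos(\eta+\omega t)$ cannot vanish on an interval, and substitute directly into $h_\eta$; your appeal to uniqueness at the end is harmless but unnecessary, since $f_\eta$ already satisfies \eqref{eq:volterra-second-kind} by Theorem~\ref{thm:phase-resolved-volterra} and $K_\eta\equiv0$ makes that equation read $f_\eta=h_\eta$ directly.
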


\begin{IEEEproof}
See Appendix~\ref{app:stationary-ig-proof}.
\end{IEEEproof}

For nonconstant drift, a nonzero kernel identifies a
survival-history correction; it does not by itself exclude an
IG-shaped density. Section~\ref{sec:cig-approximation} constructs
C-IG for this phase-conditioned law; Section~\ref{sec:cig-accuracy}
examines which arrival-time features it preserves.

%% file: sections/04_cig_approximation.tex
\section{The Corrected Inverse-Gaussian Approximation}
\label{sec:cig-approximation}

The corrected inverse-Gaussian (C-IG) approximation of our
conference manuscript~\cite{Lee2026CIG} combines a cumulative-drift
exponential core, a Gaussian expected-positive-part prefactor,
and a phase-dependent normalizer. For the same physical parameters
and release phase as the exact law, it approximates $f_\eta$ by
$g_\eta$. The construction retains accumulated displacement and
instantaneous drift through explicit terms; its treatment of
first-contact history is assessed against the Volterra law in
Section~\ref{sec:cig-accuracy}.

\subsection{Change-of-measure representation}
\label{subsec:cig-girsanov}

Fix a release phase $\eta$ and a finite horizon
$T_{\mathrm h}>0$. On a common path space, let $X(0)=0$ and
consider the drift-free reference dynamics
\begin{equation}
  dX(t)=\sqrt{2D}\,dW_0(t)
  \label{eq:cig-reference-dynamics}
\end{equation}
under the drift-free reference measure $\mathbb P_0$.
We write $\mathbb P_\eta$ for the measure associated with
the prescribed drift at release phase $\eta$, and
$\mathbb E_0$ for expectation under $\mathbb P_0$.
The filtration $\{\mathcal F_t\}_{t\geq0}$ represents the
trajectory information available up to time $t$.
The deterministic function $v_\eta$ is bounded on
$[0,T_{\mathrm h}]$, so Novikov's condition holds.
Girsanov's theorem gives the likelihood ratio
\begin{equation}
  \begin{aligned}
    L_\eta(t)
    &:=
    \left.
      \frac{d\mathbb P_\eta}{d\mathbb P_0}
    \right|_{\mathcal F_t}\\
    &=
    \exp\!\left\{
      \frac{1}{2D}\int_0^t v_\eta(u)\,dX(u)
      -\frac{1}{4D}\int_0^t v_\eta(u)^2\,du
    \right\}.
  \end{aligned}
  \label{eq:cig-girsanov-likelihood}
\end{equation}
Under $\mathbb P_\eta$, the coordinate process follows the
phase-conditioned dynamics in~\eqref{eq:released-particle-sde}.

Let $T=\inf\{t>0:X(t)=d\}$ be the first-contact functional on
this path space. Under $\mathbb P_0$, its density is
\begin{equation}
  f_{\mathrm{BM}}(t)
  =
  \frac{d}{\sqrt{4\pi Dt^3}}
  \exp\!\left(-\frac{d^2}{4Dt}\right),
  \qquad t>0.
  \label{eq:cig-drift-free-density}
\end{equation}
The change of measure implies
\begin{equation}
  f_\eta(t)
  =
  f_{\mathrm{BM}}(t)\,
  \mathbb E_0[L_\eta(t)\mid T=t]
  \label{eq:cig-exact-density-representation}
\end{equation}
for almost every $t\in(0,T_{\mathrm h})$.
Conditioning on $T=t$ is understood through a regular
conditional distribution of first-hitting trajectories.
Since the horizon is arbitrary, the representation applies
at any finite propagation time.

Integration by parts, using $X(0)=0$ and $X(t)=d$ at first
contact, gives
\begin{equation}
  \begin{aligned}
    \log L_\eta(t)
    &=
    \frac{d\,v_\eta(t)}{2D}
    -\frac{1}{2D}\int_0^t v_\eta'(u)X(u)\,du\\
    &\quad
    -\frac{1}{4D}\int_0^t v_\eta(u)^2\,du,
    \qquad \text{on }\{T=t\}.
  \end{aligned}
  \label{eq:cig-likelihood-at-hitting}
\end{equation}
The three terms describe the endpoint contribution,
the coupling between drift variation and the intermediate
trajectory, and the accumulated squared drift.
Only the middle term depends on the path once the hitting
time is fixed.

\subsection{Cumulative-drift exponential core}
\label{subsec:cig-cumulative-core}

Following~\cite{Lee2026CIG}, we approximate the path-dependent
term using the straight-path ansatz
\begin{equation}
  \overline X_t(u)=\frac{d\,u}{t},
  \qquad 0\leq u\leq t.
  \label{eq:cig-straight-path}
\end{equation}
This substitutes a representative path into the likelihood
functional; it does not identify the exact conditional mean
or a proven most probable first-hitting trajectory.

Since
$\int_0^t u\,v_\eta'(u)\,du
=t v_\eta(t)-M_\eta(t)$, substitution gives
\begin{equation}
  \begin{aligned}
    \frac{d\,v_\eta(t)}{2D}
    &-\frac{1}{2D}
      \int_0^t v_\eta'(u)\overline X_t(u)\,du\\
    &=\frac{d\,M_\eta(t)}{2Dt}.
  \end{aligned}
  \label{eq:cig-endpoint-cancellation}
\end{equation}
Combining this expression with the exponent of
$f_{\mathrm{BM}}$ yields
\begin{equation}
  \begin{aligned}
    -\frac{d^2}{4Dt}
    &+\frac{d\,M_\eta(t)}{2Dt}
    -\frac{1}{4D}\int_0^t v_\eta(u)^2\,du\\
    &=
    -\frac{(d-M_\eta(t))^2}{4Dt}
    -\Gamma_\eta(t),
  \end{aligned}
  \label{eq:cig-exponent-decomposition}
\end{equation}
where
\begin{equation}
  \begin{aligned}
    \Gamma_\eta(t)
    &:=
    \frac{1}{4D}
    \left[
      \int_0^t v_\eta(u)^2\,du
      -\frac{M_\eta(t)^2}{t}
    \right]\\
    &=
    \frac{1}{4D}
    \int_0^t
    \left[
      v_\eta(u)-\frac{M_\eta(t)}{t}
    \right]^2du
    \geq0.
  \end{aligned}
  \label{eq:cig-drift-variation-term}
\end{equation}
C-IG makes the additional approximation of omitting
$\Gamma_\eta(t)$ and retaining the exponential core
\begin{equation}
  E_\eta(t)
  =
  \exp\!\left[
    -\frac{(d-M_\eta(t))^2}{4Dt}
  \right].
  \label{eq:cig-exponential-core}
\end{equation}
No uniform smallness assumption is imposed on $\Gamma_\eta(t)$.

The retained exponent has the constant-drift IG form with
$v_0t$ replaced by $M_\eta(t)$. It therefore preserves the
release-phase dependence of accumulated transport.
The remaining construction supplies a positive prefactor
that responds to instantaneous drift.

\subsection{Gaussian positive-part prefactor}
\label{subsec:cig-epf}

Let
\begin{equation}
  \ell_D(t)=\sqrt{2Dt},
  \qquad
  \delta_\eta(t)
  =\frac{v_\eta(t)-v_0}{v_0}
  \label{eq:cig-diffusion-scale}
\end{equation}
denote the free Brownian displacement scale and the
dimensionless drift deviation. Define the effective
prefactor coordinate
\begin{equation}
  F_{\mathrm{mean},\eta}(t)
  =
  d+\delta_\eta(t)\ell_D(t).
  \label{eq:cig-effective-coordinate}
\end{equation}
This length-valued coordinate perturbs the IG numerator $d$
on the diffusion scale.

To retain a positive contribution when the effective
coordinate becomes negative, introduce the scalar Gaussian
model
\begin{equation}
  \mathcal F_{\eta,t}
  =
  F_{\mathrm{mean},\eta}(t)+\ell_D(t)\xi,
  \qquad \xi\sim\mathcal N(0,1).
  \label{eq:cig-gaussian-coordinate}
\end{equation}
This construction is termed the expected-positive-flux (EPF)
closure in our conference work~\cite{Lee2026CIG}. The auxiliary coordinate is not the
physical probability current or the position distribution
of surviving molecules. Its scale is chosen to match free
Brownian fluctuations without introducing an additional
fitted parameter.

For $\rho>0$ and $z\in\R$, define
\begin{equation}
  \begin{aligned}
    \mathcal P_\rho(z)
    &:=
    \mathbb E[(z+\rho\xi)_+]\\
    &=
    \int_0^\infty
      \frac{y}{\rho}
      \varphi\!\left(\frac{y-z}{\rho}\right)dy\\
    &=
    z\Phi\!\left(\frac{z}{\rho}\right)
    +\rho\varphi\!\left(\frac{z}{\rho}\right),
  \end{aligned}
  \label{eq:cig-gaussian-positive-part}
\end{equation}
where $(x)_+=\max\{x,0\}$ and $\varphi$ is the standard
normal density. Equation~\eqref{eq:cig-gaussian-positive-part}
is exact for the Gaussian model in
\eqref{eq:cig-gaussian-coordinate}.

A direct positive-part expectation would also modify the
constant-drift prefactor. To preserve that baseline, set
\begin{equation}
  F_{\mathrm{smooth},\eta}(t)
  =
  d\,
  \frac{
    \mathcal P_{\ell_D(t)}
    \bigl(F_{\mathrm{mean},\eta}(t)\bigr)
  }{
    \mathcal P_{\ell_D(t)}(d)
  }.
  \label{eq:cig-baseline-preserving-prefactor}
\end{equation}
The strictly positive denominator preserves
$F_{\mathrm{smooth},\eta}(t)=d$ when $a=0$.

For finite $t>0$, the Gaussian positive-part expectation
remains positive even when
$F_{\mathrm{mean},\eta}(t)<0$.
When both effective and baseline coordinates are large
relative to $\ell_D(t)$, the Gaussian smoothing is small
and the prefactor approaches
$F_{\mathrm{mean},\eta}(t)$.

\subsection{Phase normalization and constant-drift recovery}
\label{subsec:cig-normalization}

Combining the exponential core and the EPF prefactor gives
the unnormalized (raw) C-IG kernel
\begin{equation}
  \widetilde g_\eta(t)
  =
  \frac{F_{\mathrm{smooth},\eta}(t)}
       {\sqrt{4\pi Dt^3}}
  \exp\!\left[
    -\frac{(d-M_\eta(t))^2}{4Dt}
  \right],
  \qquad t>0.
  \label{eq:cig-raw-kernel}
\end{equation}
Define its phase-dependent mass and normalized density by
\begin{equation}
  Z_\eta=\int_0^\infty\widetilde g_\eta(u)\,du,
  \qquad
  g_\eta(t)=\frac{\widetilde g_\eta(t)}{Z_\eta}.
  \label{eq:cig-normalized-density}
\end{equation}

For the positive-mean sinusoidal model,
$0<Z_\eta<\infty$. To see finiteness, first note that
$M_\eta(t)=O(t)$ and
$F_{\mathrm{smooth},\eta}(t)\to d$ as $t\downarrow0$.
Thus the raw kernel is bounded near zero by a constant
multiple of $t^{-3/2}e^{-c/t}$ for some $c>0$.
At large times, $M_\eta(t)=v_0t+O(1)$ and
$\delta_\eta(t)$ is bounded.
The scaling relation
$\mathcal P_\rho(z)=\rho\mathcal P_1(z/\rho)$
then gives $F_{\mathrm{smooth},\eta}(t)=O(1)$, and hence
\begin{equation}
  \widetilde g_\eta(t)
  =
  O\!\left(
    t^{-3/2}e^{-v_0^2t/(4D)}
  \right),
  \qquad t\to\infty.
  \label{eq:cig-tail-integrability}
\end{equation}
Strict positivity of the raw kernel gives $Z_\eta>0$.
Thus $g_\eta$ is a positive probability density.

Normalization preserves the raw kernel's relative time dependence
and peak locations. Its mass discrepancy $|Z_\eta-1|$ and the
normalized density/CDF errors measure distinct properties;
the numerical conventions are given in
\secref{sec:numerical-methods}{subsec:numerical-settings}.

When $a=0$, $M_\eta(t)=v_0t$ and
$F_{\mathrm{smooth},\eta}(t)=d$, so $Z_\eta=1$ and C-IG reduces
to the IG density in~\eqref{eq:stationary-ig-recovery}.

\subsection{Channel evaluation and relation to the exact law}
\label{subsec:cig-channel-evaluation}

Let
\begin{equation}
  G_\eta(t)=\int_0^t g_\eta(u)\,du
  \label{eq:cig-cdf}
\end{equation}
denote the C-IG CDF, and extend $g_\eta$ by zero for
$t\leq0$. Substitution into the transition law of
Section~\ref{sec:phase-resolved-channel} gives
\begin{equation}
  p_{Y\mid S}^{\mathrm{C\text{-}IG}}(y\mid s)
  =
  g_{\eta(s)}(y-s)\,
  \mathbf 1_{\{y>s\}}.
  \label{eq:cig-timing-transition}
\end{equation}
The functions $Z_\eta$, $g_\eta$, and $G_\eta$ inherit
the release-phase periodicity of $v_\eta$ and $M_\eta$.

For known physical parameters and a known periodic waveform,
the normalizers can be tabulated offline on a phase grid.
With analytical $M_\eta(t)$ and fixed-cost lookup or interpolation
of $Z_\eta$, density evaluation costs $O(1)$ at fixed numerical
precision, excluding offline normalization. CDF evaluation
requires integration or a precomputed table. The exact Volterra
density can also be tabulated, following its recursive construction
over a time grid.

The exact law enforces first contact through the Volterra
equation. C-IG replaces the conditional path calculation by
the cumulative-drift and EPF approximations, followed by
normalization; it is not obtained by discarding the Volterra
integral term. Neither the kernel bound
in~\eqref{eq:dimensionless-operator-bound} nor normalization
provides a C-IG error bound. Section~\ref{sec:cig-accuracy}
assesses its density and CDF errors against the numerical
Volterra reference.

%% file: sections/05_numerical_methods.tex
\section{Numerical Methods and Reference Validation}
\label{sec:numerical-methods}

The Volterra evaluator supplies the reference for the C-IG
comparisons inherited from our conference manuscript~\cite{Lee2026CIG}
and for the phase-family analysis developed here. We collect the
comparison conventions and reliability checks below, with implementation
settings in Appendix~\ref{app:numerical-implementation}.

\subsection{Finite-grid Volterra evaluation}
\label{subsec:volterra-numerics}

On $t_n=n\Delta t$, $n=0,\ldots,N_t$, with
$t_{N_t}=t_{\max}$, composite trapezoidal quadrature of
\eqref{eq:volterra-second-kind} gives
\begin{equation}
  \widehat f_{\eta,n}
  =h_\eta(t_n)+\Delta t\sum_{j=1}^{n-1}
    K_\eta(t_n,t_j)\widehat f_{\eta,j},\qquad n\geq1,
  \label{eq:volterra-product-integration}
\end{equation}
with $\widehat f_{\eta,0}=0$. The zero initial density and
diagonal kernel eliminate the endpoint terms, yielding an explicit
scheme~\cite{Linz1985Volterra}. The coefficients use
\eqref{eq:physical-volterra-forcing} and
\eqref{eq:physical-volterra-kernel}. Cumulative quadrature gives
\begin{equation}
  \widehat F_{\eta,n}=\widehat F_{\eta,n-1}
  +\frac{\Delta t}{2}
    (\widehat f_{\eta,n-1}+\widehat f_{\eta,n}),
  \qquad\widehat F_{\eta,0}=0.
  \label{eq:volterra-cdf-quadrature}
\end{equation}
Computing kernel rows as needed requires $O(N_t^2)$ operations
and $O(N_t)$ memory per phase. Raw densities are neither clipped nor
renormalized. The diagnostics
\begin{align}
  \epsilon_{\mathrm{mass},\eta}
    &=|1-\widehat F_\eta(t_{\max})|,
    \label{eq:terminal-mass-error}\\
  \epsilon_{-,\eta}
    &=\operatorname{Trap}_{[0,t_{\max}]}
      [\max\{-\widehat f_\eta,0\}]
    \label{eq:negative-mass-error}
\end{align}
measure terminal-mass discrepancy and the integrated negative part. The former
combines time discretization and horizon truncation.

\subsection{Comparison metrics and baselines}
\label{subsec:common-baselines-metrics}

We compare normalized C-IG with the Volterra reference using
CDF and density discrepancies on a common horizon:
\begin{align}
  \mathcal D_{t_{\mathrm{cmp}}}(F,G)
    &=\sup_{0\leq t\leq t_{\mathrm{cmp}}}|F(t)-G(t)|,
    \label{eq:finite-horizon-cdf-distance}\\
  \mathcal L^1_{t_{\mathrm{cmp}}}(f,g)
    &=\int_0^{t_{\mathrm{cmp}}}|f(t)-g(t)|\,dt.
    \label{eq:finite-horizon-density-distance}
\end{align}
The supremum is evaluated as a maximum on the common time grid
and the integral by trapezoidal quadrature. We omit the horizon
subscript when clear. CDF errors measure cumulative arrival differences;
the integrated density error measures discrepancies in the arrival profile.

The conference direct comparison also includes nominal IG with
$v=v_0$ and a drift-only fitted IG. Writing $F_{\mathrm{IG}}(t;d,D,v)$
for the CDF associated with~\eqref{eq:stationary-ig-recovery}, the latter
selects $\widehat v_\eta^{\mathrm d}>0$ by numerical minimization of
\begin{equation}
  J_\eta(v)=\max_{t_n\leq t_{\mathrm{cmp}}}
    |\widehat F_\eta(t_n)-F_{\mathrm{IG}}(t_n;d,D,v)|.
  \label{eq:ig-grid-objective}
\end{equation}
Physical $d$ and $D$ remain fixed. This retrospective benchmark
measures the effect of replacing pulsatile transport by an effective
constant drift; its reported residual concerns this restricted fit.
Density errors use the same CDF-fitted drift.

The direct C-IG comparison and regime sweep use raw cumulative
Volterra quadrature. Phase-family CDFs use cumulative maximization and
clipping to $[0,1]$, without terminal normalization; their solver
diagnostics use raw output. Only the ECDF comparison in
Fig.~\ref{fig:volterra-oracle-validation}(c) uses the terminal-normalized
reference specified in
\secref{sec:numerical-methods}{subsec:trajectory-validation}.

\subsection{Evaluation settings and C-IG normalization}
\label{subsec:numerical-settings}

Table~\ref{tab:consolidated-numerical-settings} distinguishes the
conference C-IG experiments from the two 48-phase channel families.
The latter support the phase and protocol analysis in
Section~\ref{sec:phase-structure}; they are separate parameter settings.
The non-reversing phase family is referred to as the regular family.

\begin{table*}[!tp]
\caption{Main Numerical Settings for Approximation and
Phase-Structure Comparisons}
\label{tab:consolidated-numerical-settings}
\centering
\scriptsize
\renewcommand{\arraystretch}{1.18}
\setlength{\tabcolsep}{4pt}
\begin{tabularx}{\textwidth}{@{}
  >{\RaggedRight\arraybackslash}p{0.18\textwidth}
  >{\RaggedRight\arraybackslash}p{0.28\textwidth}
  >{\RaggedRight\arraybackslash}p{0.22\textwidth}
  >{\RaggedRight\arraybackslash}X@{}}
\toprule
Numerical block
& Physical parameters $(d,D,v_0,a,\omega)$
& Release phases
& Time grid and horizons\\
\midrule
Direct C-IG comparison
& $(5,1,1,2,2\pi)$
& $\eta=0$
& $\Delta t=0.002$;
  $t_{\mathrm{cmp}}=20$;
  C-IG normalization horizon $t_Z=80$.\\
\addlinespace
C-IG operating-regime sweep
& $d=v_0=1$, $D=0.25$;
  13 equally spaced $\alpha=a/v_0\in[0,2]$;
  13 logarithmically spaced $\Omega\in[0.5,16]$
& $\eta_k=2\pi k/32$,
  $k=0,\ldots,31$
& $\Delta t=0.004$;
  $t_{\mathrm{cmp}}=12$;
  $t_Z=80$.
  Selected refinements are described in
  Section~\ref{sec:cig-accuracy}.\\
\addlinespace
Regular phase family
& $(1,0.2,0.8,0.3,2)$;
  $(\alpha,\Omega,\mathrm{Pe})=(0.375,2.5,4)$
& $\eta_k=2\pi k/48$,
  $k=0,\ldots,47$
& $\Delta t=0.002$;
  $t_{\max}=t_{\mathrm{cmp}}=12$.\\
\addlinespace
Transient-backflow phase family
& $(1,0.2,0.6,0.9,2)$;
  $(\alpha,\Omega,\mathrm{Pe})=(1.5,10/3,3)$
& $\eta_k=2\pi k/48$,
  $k=0,\ldots,47$
& $\Delta t=0.002$;
  $t_{\max}=t_{\mathrm{cmp}}=20$.\\
\bottomrule
\end{tabularx}
\end{table*}

The C-IG normalizer is computed on a separate horizon $t_Z$:
\begin{equation}
  \widehat Z_\eta(t_Z)
    =\operatorname{Trap}_{[0,t_Z]}[\widetilde g_\eta],
  \qquad
  \widehat g_\eta(t)=\frac{\widetilde g_\eta(t)}{\widehat Z_\eta(t_Z)}.
  \label{eq:numerical-cig-normalization}
\end{equation}
Quadrature uses the Volterra time spacing; cumulative integration
of $\widehat g_\eta$ gives the C-IG CDF. The raw mass
$\widehat Z_\eta$ is reported separately from normalized shape errors,
without renormalizing either law on $[0,t_{\mathrm{cmp}}]$.
For the direct comparison, extending $t_Z$ from $40$ to $80$ changes
$\widehat Z_0$ by $1.2\times10^{-5}$; the raw Volterra terminal deficit
at $t_{\mathrm{cmp}}=20$ is approximately $2.5\times10^{-3}$.

\subsection{Deterministic reference checks}
\label{subsec:deterministic-validation}

Figure~\ref{fig:volterra-oracle-validation}(a) checks the
constant-drift case $(d,D,v_0,a,\omega)=(1,0.2,0.8,0,2)$ on $[0,10]$.
At $\Delta t=0.004$, the density agrees with the analytical IG law
to machine precision, the CDF error is $3.78\times10^{-6}$, and the
terminal deficit is $4.71\times10^{-5}$. For the regular-pulsation case at
$\eta=0.4$, density differences from the $\Delta t=0.004$ solution
on $[0,10]$ decrease from $3.62\times10^{-5}$ at $\Delta t=0.016$
to $9.46\times10^{-6}$ at $\Delta t=0.008$ [panel (b)]. This factor
of $3.83$ supports refinement consistency without establishing an
asymptotic convergence order.

Table~\ref{tab:reference-diagnostics} checks the Fortet and
stopped-moment identities independently of the recursion, using the
residuals defined in
\appsubref{app:numerical-implementation}{appsub:deterministic-details}.
Their finite-grid values include diagnostic quadrature error and,
for moments, tail truncation. The selected cases have no negative
density values. All 96 phase-family laws satisfy the mass and
nonnegativity criteria in
\appsubref{app:numerical-implementation}{appsub:phase-gates}, with
maximum positive terminal deficits $1.16\times10^{-5}$ and
$1.40\times10^{-5}$. The regime sweep has the additional time and
phase refinements in
\secref{sec:cig-accuracy}{subsec:cig-regime-refinement}.

\begin{table*}[!tp]
\caption{Volterra Reference Diagnostics.
The Upper Block Uses $\Delta t=10^{-3}$;
the Lower Block Reports Componentwise Maxima over Each
48-Phase Family on the Grids in
Table~\ref{tab:consolidated-numerical-settings}.}
\label{tab:reference-diagnostics}
\centering
\scriptsize
\renewcommand{\arraystretch}{1.15}
\setlength{\tabcolsep}{5pt}
\begin{tabular}{llcccc}
\toprule
Case & Phase
& $\epsilon_{\mathrm{mass}}$
& $\max_t|r_{\mathrm F,\eta}(t)|$
& $|\widehat r_{1,\eta}|$
& $|\widehat r_{2,\eta}|$\\
\midrule
Constant-drift IG & ---
& $5.10\times10^{-7}$
& $2.56\times10^{-6}$
& $6.08\times10^{-6}$
& $6.95\times10^{-5}$\\
Regular & $0.4$
& $6.35\times10^{-7}$
& $4.69\times10^{-6}$
& $3.93\times10^{-6}$
& $4.42\times10^{-5}$\\
Regular & $0.4+\pi/2$
& $1.18\times10^{-6}$
& $3.22\times10^{-6}$
& $7.33\times10^{-6}$
& $8.68\times10^{-5}$\\
Regular & $0.4+\pi$
& $3.36\times10^{-7}$
& $1.33\times10^{-6}$
& $8.64\times10^{-6}$
& $9.62\times10^{-5}$\\
Regular & $0.4+3\pi/2$
& $7.08\times10^{-8}$
& $2.69\times10^{-6}$
& $4.74\times10^{-6}$
& $5.05\times10^{-5}$\\
Backflow & $3\pi/2$
& $5.93\times10^{-6}$
& $3.83\times10^{-6}$
& $7.86\times10^{-5}$
& $8.86\times10^{-4}$\\
\midrule
Regular family & Maximum over 48 phases
& $1.16\times10^{-5}$
& $1.36\times10^{-5}$
& $1.06\times10^{-4}$
& $9.38\times10^{-4}$\\
Backflow family & Maximum over 48 phases
& $1.40\times10^{-5}$
& $2.59\times10^{-5}$
& $1.77\times10^{-4}$
& $1.94\times10^{-3}$\\
\bottomrule
\end{tabular}
\end{table*}

\begin{figure}[!tp]
  \centering
  \includegraphics[width=0.98\columnwidth]
    {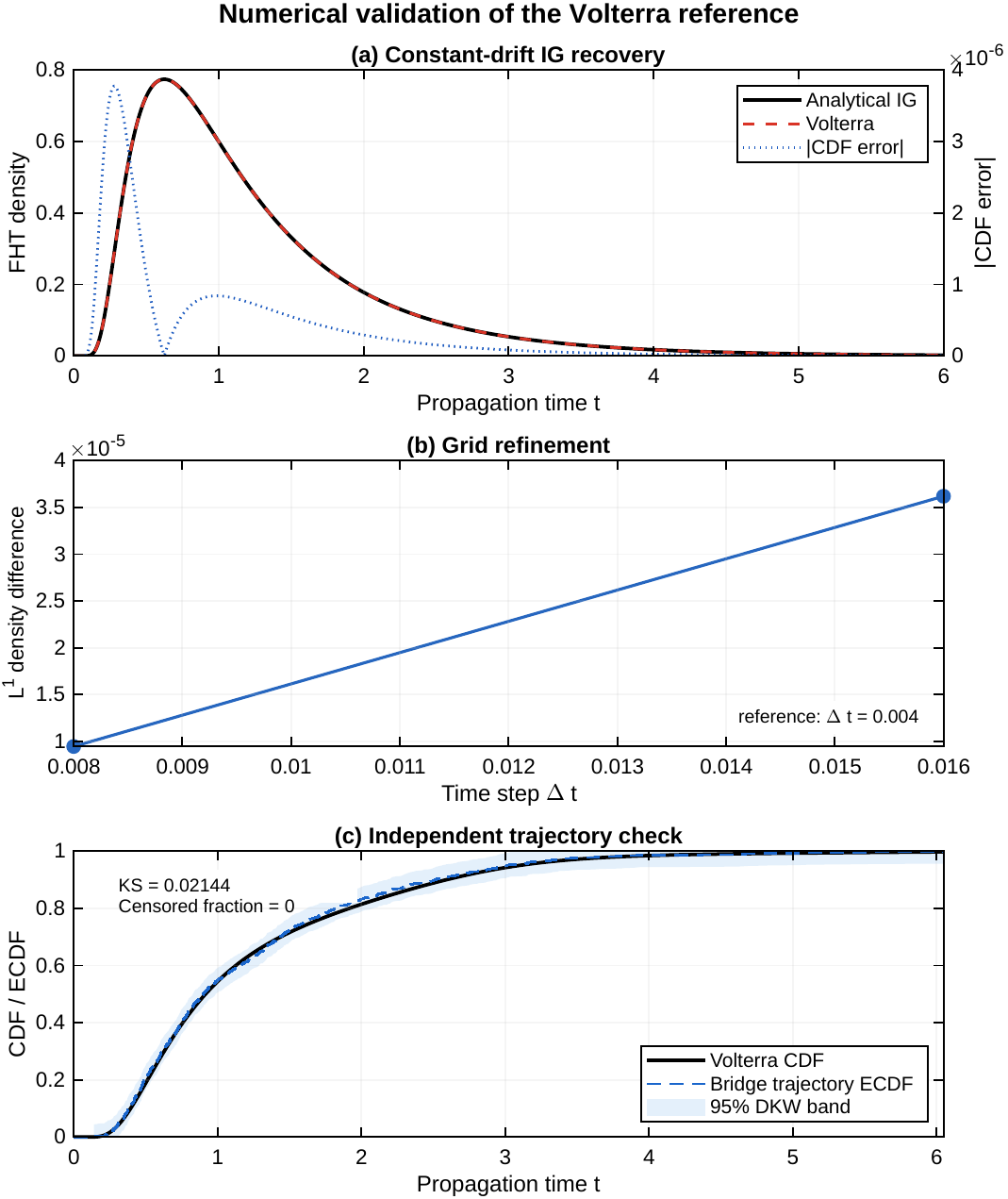}
  \caption{Validation of the finite-grid Volterra evaluator.
  (a) Analytical constant-drift IG recovery and absolute CDF
  error. (b) Pulsatile-drift density refinement relative to
  the $\Delta t=0.004$ solution.
  (c) Volterra CDF and independently simulated,
  bridge-corrected trajectory ECDF; the shaded region
  shows the nominal 95\% DKW band around the reference CDF.
  The terminal-mass normalization used only in panel (c)
  is specified in \secref{sec:numerical-methods}{subsec:trajectory-validation}.}
  \label{fig:volterra-oracle-validation}
\end{figure}

\subsection{Independent trajectory comparisons}
\label{subsec:trajectory-validation}

For the regular-pulsation case at $\eta=0.4+\pi/2$, all 1000
bridge-corrected trajectories arrive by time $15$. In
Fig.~\ref{fig:volterra-oracle-validation}(c), the Volterra CDF at
$\Delta t=0.004$ is divided by its terminal mass for comparison
with their ECDF. The KS statistic relative to the interpolated reference is $0.02144$,
within the nominal 95\% Dvoretzky--Kiefer--Wolfowitz (DKW)
confidence-band half-width $0.04295$~\cite{Massart1990}.

The conference direct comparison~\cite{Lee2026CIG} uses $10^5$
Euler--Maruyama trajectories with interpolated crossings. Its maximum
ECDF discrepancy from raw Volterra quadrature at histogram bin edges
in $[0,8]$ is $0.00415$, compared with the asymptotic one-sample
KS critical value $0.00430$ at the 5\% significance level.
This value provides a sampling-error scale for the bin-edge comparison.
These sampling-based thresholds exclude time discretization,
crossing approximation, and Volterra error. Generators, seeds,
steps, and censoring conventions are specified in
\appsubref{app:numerical-implementation}{appsub:trajectory-details}.

%% file: sections/06_cig_accuracy.tex
\Needspace{8\baselineskip}
\section{C-IG Accuracy and Operating Regimes}
\label{sec:cig-accuracy}

We assess how well C-IG preserves arrival-time structure using
the direct comparison and amplitude--frequency--phase sweep of our
conference manuscript~\cite{Lee2026CIG}. The Volterra reference and
metrics in Section~\ref{sec:numerical-methods} distinguish cumulative
agreement, density accuracy, and resolved peak structure.

\subsection{Direct phase-conditioned comparison}
\label{subsec:cig-direct-comparison}

The direct-comparison setting in
Table~\ref{tab:consolidated-numerical-settings} has
$(\alpha,\Omega,\mathrm{Pe})=(2,10\pi,5)$ at $\eta=0$,
including transient flow reversal.
Figure~\ref{fig:cig-direct-comparison} compares C-IG with the
Volterra reference, nominal and drift-only fitted IG, and particle simulations.

The estimated raw C-IG mass is
\begin{equation}
  \widehat Z_0(80)=1.151097,
  \label{eq:cig-direct-raw-mass}
\end{equation}
approximately $15.1\%$ above unity.
After normalization by~\eqref{eq:numerical-cig-normalization},
its errors on $[0,20]$ are reported in
Table~\ref{tab:cig-direct-errors}.

\begin{table}[!tp]
\caption{Direct-Comparison Errors Relative to the
Volterra Reference on $[0,20]$}
\label{tab:cig-direct-errors}
\centering
\renewcommand{\arraystretch}{1.12}
\setlength{\tabcolsep}{5pt}
\begin{tabular}{lcc}
\toprule
Approximation
& $\mathcal D_{20}$
& $\mathcal L^1_{20}$\\
\midrule
Nominal IG, $v=v_0$
& $0.08047$ & $0.43072$\\
Drift-only IG, $v=1.12889$
& $0.03443$ & $0.42486$\\
Phase-normalized C-IG
& $0.04325$ & $0.33778$\\
\bottomrule
\end{tabular}
\end{table}

C-IG has smaller density and CDF errors than nominal IG and
the smallest density error of the three approximations. The
drift-only fit at $\widehat v_{\eta=0}^{\mathrm d}=1.12889$ has the
smallest CDF error. The rankings therefore differ between cumulative
arrival probabilities and the density profile.

This distinction is visible in
Fig.~\ref{fig:cig-direct-comparison}(a).
The C-IG construction responds to both accumulated
displacement and instantaneous drift, allowing multiple
density peaks. A constant-drift IG density is unimodal,
so adjustment of its drift can change the dominant
arrival scale but cannot reproduce multiple peaks.
Panel (b) shows that the resulting local differences
can coexist with relatively close cumulative curves.

\begin{figure*}[!tp]
  \centering
  \includegraphics[width=0.98\textwidth]
    {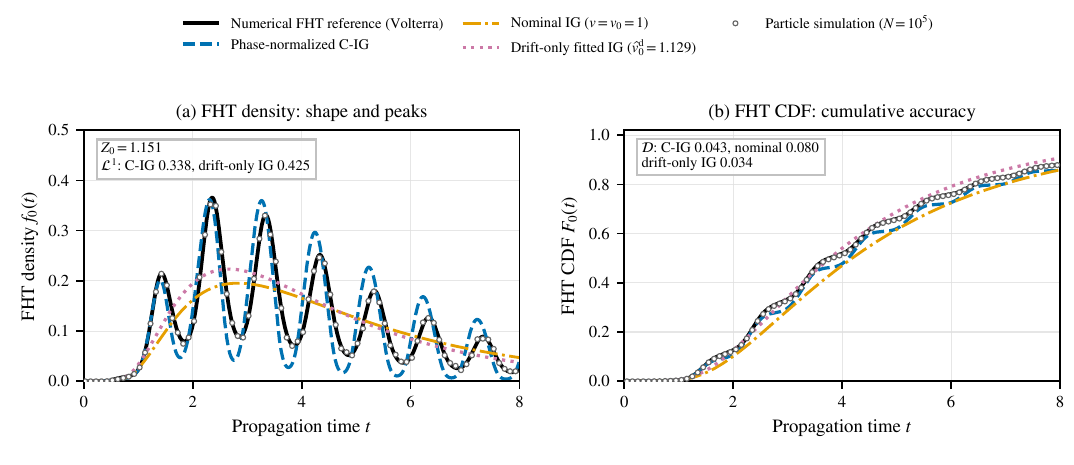}
  \caption{Phase-conditioned first-hitting distributions
for $(d,D,v_0,a,\omega)=(5,1,1,2,2\pi)$ and $\eta=0$,
adapted from~\cite{Lee2026CIG} with updated labels.
(a) Density comparison with the Volterra reference,
phase-normalized C-IG, nominal IG, drift-only fitted IG, and
$10^5$ particle trajectories.
(b) Corresponding cumulative distributions.
The panels show $[0,8]$; the reported errors use $[0,20]$,
and the C-IG normalization horizon is $80$.
The drift-only fit holds the physical $d,D$ fixed and minimizes
the grid-based CDF discrepancy defined in
\secref{sec:numerical-methods}{subsec:common-baselines-metrics}.}
  \label{fig:cig-direct-comparison}
\end{figure*}

The particle check is reported in
\secref{sec:numerical-methods}{subsec:trajectory-validation};
all approximation errors use the deterministic Volterra reference.

\subsection{Amplitude--frequency--phase assessment}
\label{subsec:cig-regime-assessment}

The sweep in Table~\ref{tab:consolidated-numerical-settings}
contains 169 amplitude--frequency cells at $\mathrm{Pe}=4$
and 5408 phase-conditioned Volterra solutions, compared on
$[0,12]$.

For each cell, define the maximum sampled-phase errors
\begin{equation}
  \begin{aligned}
    \mathcal D_{\max}(\alpha,\Omega)
    &=
    \max_{0\leq k<32}
      \mathcal D_{12}
      \bigl(F_{\eta_k},G_{\eta_k}\bigr),\\
    \mathcal L^1_{\max}(\alpha,\Omega)
    &=
    \max_{0\leq k<32}
      \mathcal L^1_{12}
      \bigl(f_{\eta_k},g_{\eta_k}\bigr).
  \end{aligned}
  \label{eq:cig-worst-sampled-phase-errors}
\end{equation}
The dependence of $F_\eta$ and $G_\eta$ on
$(\alpha,\Omega,\mathrm{Pe})$ is suppressed; the errors use
the time-grid metrics of
\secref{sec:numerical-methods}{subsec:common-baselines-metrics}.

We use three empirical accuracy classes:
\begin{equation}
  \begin{aligned}
    \text{strict:}\quad&
      \mathcal D_{\max}\leq0.025,\quad
      \mathcal L^1_{\max}\leq0.10;\\
    \text{usable only:}\quad&
      \mathcal D_{\max}\leq0.05,\quad
      \mathcal L^1_{\max}\leq0.20,\\
    &\text{but not strict};\\
    \text{caution:}\quad&
      \text{otherwise}.
  \end{aligned}
  \label{eq:cig-operating-classes}
\end{equation}
Here, ``usable'' includes both strict and usable-only
cells. These labels refer to the stated numerical
criteria, rather than to a universal communication
performance requirement.

The final map in
Fig.~\ref{fig:cig-operating-regimes}(a), including the
refinement described below, contains 43 strict,
34 usable-only, and 92 caution cells.
Table~\ref{tab:cig-regime-summary} separates the usable
counts according to the sign of the instantaneous drift.

\begin{table}[!tp]
\caption{Cells Meeting the Usable Criteria at
$\mathrm{Pe}=4$}
\label{tab:cig-regime-summary}
\centering
\renewcommand{\arraystretch}{1.12}
\setlength{\tabcolsep}{5pt}
\begin{tabular}{lcc}
\toprule
Amplitude range
& Usable / total
& Fraction\\
\midrule
$\alpha<1$
& $69/78$ & $88.5\%$\\
$\alpha=1$
& $6/13$ & $46.2\%$\\
$\alpha>1$
& $2/78$ & $2.6\%$\\
\bottomrule
\end{tabular}
\end{table}

Most sampled cells with strictly positive drift satisfy
the usable criteria.
The fraction decreases at $\alpha=1$, where the waveform
touches zero, and decreases further when $\alpha>1$
produces intervals of reverse drift.
Nevertheless, some low-frequency cells with
$\alpha<1$ also exceed the thresholds.
The no-reversal condition alone is therefore insufficient
to establish the prescribed accuracy.

\subsection{Time and phase refinement}
\label{subsec:cig-regime-refinement}

Refining 12 selected cells from the base grid in
Table~\ref{tab:consolidated-numerical-settings} uses
$\Delta t=0.002$ and 64--128 phases.

Across these checks, the maximum sampled-phase CDF
and density errors change by at most
$3.03\times10^{-4}$ and $3.16\times10^{-4}$,
respectively.
Eleven of the twelve classifications remain unchanged.

The remaining cell,
\begin{equation}
  (\alpha,\Omega)=\left(\frac23,2^{1/4}\right),
  \label{eq:cig-refined-boundary-cell}
\end{equation}
lies near the usable CDF threshold.
A local phase search with $\Delta t=10^{-3}$,
followed by confirmation at $\Delta t=5\times10^{-4}$,
gives
\begin{equation}
  \mathcal D_{12}=0.050038
  \quad\text{at}\quad
  \frac{\eta}{2\pi}\approx0.526519.
  \label{eq:cig-refined-boundary-error}
\end{equation}
Because this exceeds $0.05$, the cell is assigned to
the caution class in the final map.
Only this cell's classification is updated.

This threshold crossing illustrates phase-resolution sensitivity.
The final map summarizes sampled phases and selected local
refinements, without certifying a continuous-phase supremum.

\subsection{Density peaks and transient backflow}
\label{subsec:cig-peak-counts}

We separately examine whether C-IG reproduces the number
of resolved density peaks.
For each phase, peaks are detected on the comparison
window using a height threshold of $1\%$ and a prominence
threshold of $2\%$ of that density's phasewise maximum.
Each density uses its own maximum to set the thresholds.

Let $N_{\mathrm{pk}}(f_\eta)$ and
$N_{\mathrm{pk}}(g_\eta)$ denote the resulting counts.
For each amplitude--frequency cell, define
\begin{equation}
  P_{\mathrm{match}}(\alpha,\Omega)
  =
  \frac{1}{32}
  \sum_{k=0}^{31}
  \mathbf 1_{\{
    N_{\mathrm{pk}}(f_{\eta_k})
    =
    N_{\mathrm{pk}}(g_{\eta_k})
  \}}.
  \label{eq:cig-peak-count-agreement}
\end{equation}
Figure~\ref{fig:cig-operating-regimes}(b) shows this
fraction on the original 32-phase grid.
The local refinement of the classification map does not
replace the phase grid used for the peak-count panel.

Averaging $P_{\mathrm{match}}$ over the cells in each
amplitude range gives $78.5\%$ for $\alpha<1$,
$60.8\%$ for $\alpha=1$, and $55.4\%$ for $\alpha>1$.
Matching counts do not imply matching peak locations or
heights, but mismatched counts identify a structural
difference that a CDF metric can obscure.

\begin{figure*}[!tp]
  \centering
  \includegraphics[width=0.98\textwidth]
    {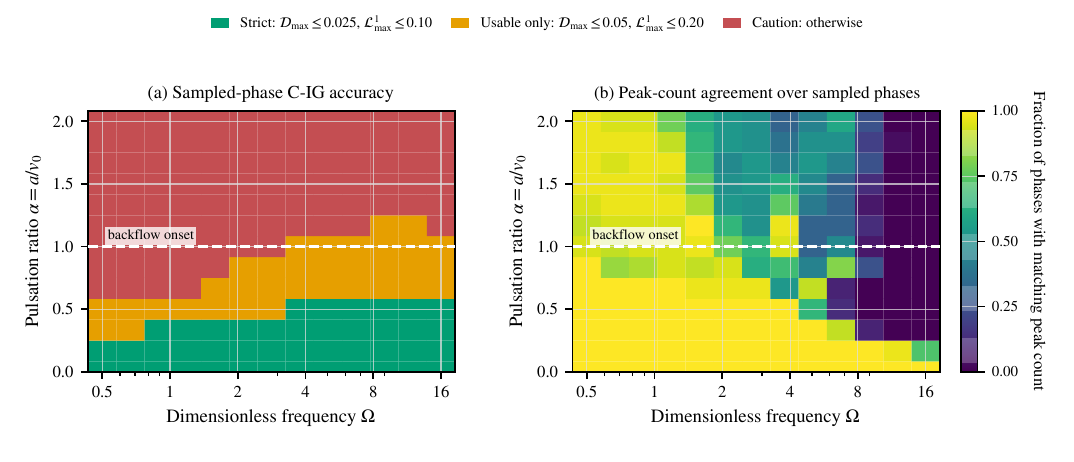}
  \caption{C-IG accuracy and density structure at
$\mathrm{Pe}=4$, adapted from~\cite{Lee2026CIG} with updated labels.
(a) Accuracy classes based on maximum sampled-phase
CDF and density errors over $[0,12]$, retaining the
refined boundary-cell classification described in
\secref{sec:cig-accuracy}{subsec:cig-regime-refinement}.
(b) Fraction of the original 32 release phases with
matching density peak counts under the $1\%$ height
and $2\%$ prominence rule.
The dashed line marks $\alpha=1$, the boundary between
nonnegative drift and transient flow reversal.}
  \label{fig:cig-operating-regimes}
\end{figure*}

Reverse-flow intervals alter which trajectories survive to later
favorable portions of the waveform. The loss of accuracy is
consistent with the EPF closure's treatment of survival history
described in
\secref{sec:cig-approximation}{subsec:cig-channel-evaluation},
although this map does not isolate the contribution of each
approximation step.

The operating classes describe the tested sinusoidal waveforms
at $\mathrm{Pe}=4$. Other P\'eclet numbers, waveforms, or
tail-sensitive windows require their own comparisons.
Section~\ref{sec:phase-structure} examines the phase family
and its protocol-induced mixtures, connecting these arrival-time
features to timing likelihoods and reception deadlines.

%% file: sections/07_phase_structure.tex
\begin{figure*}[!tp]
  \centering
  \includegraphics[width=0.485\textwidth]
    {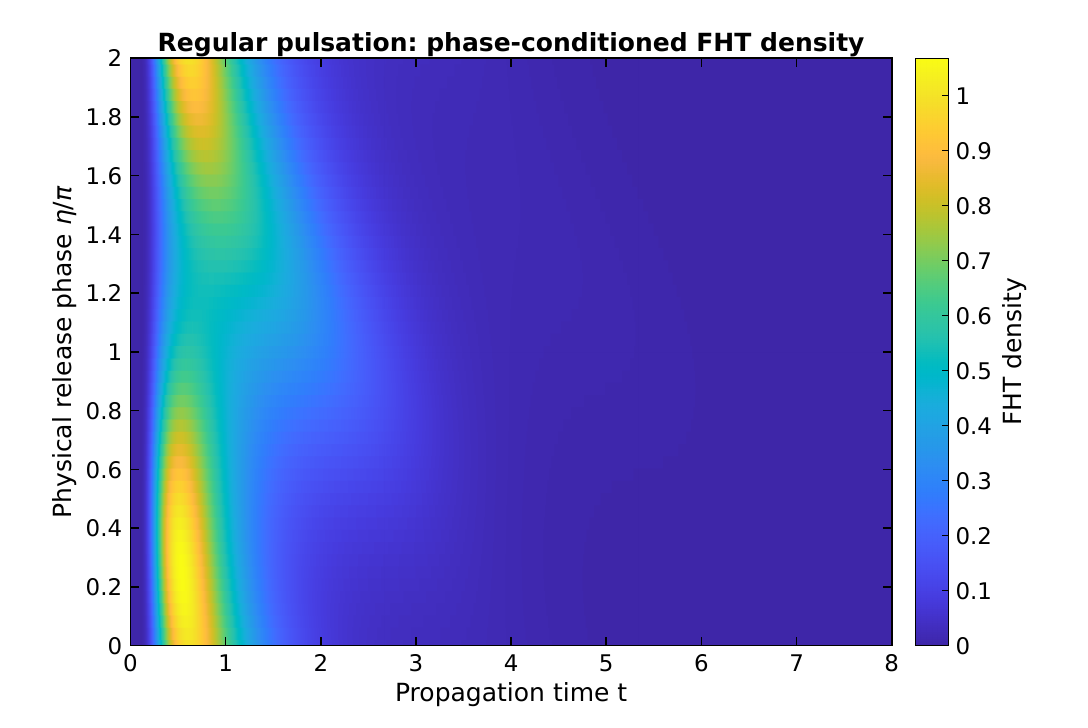}
  \hfill
  \includegraphics[width=0.485\textwidth]
    {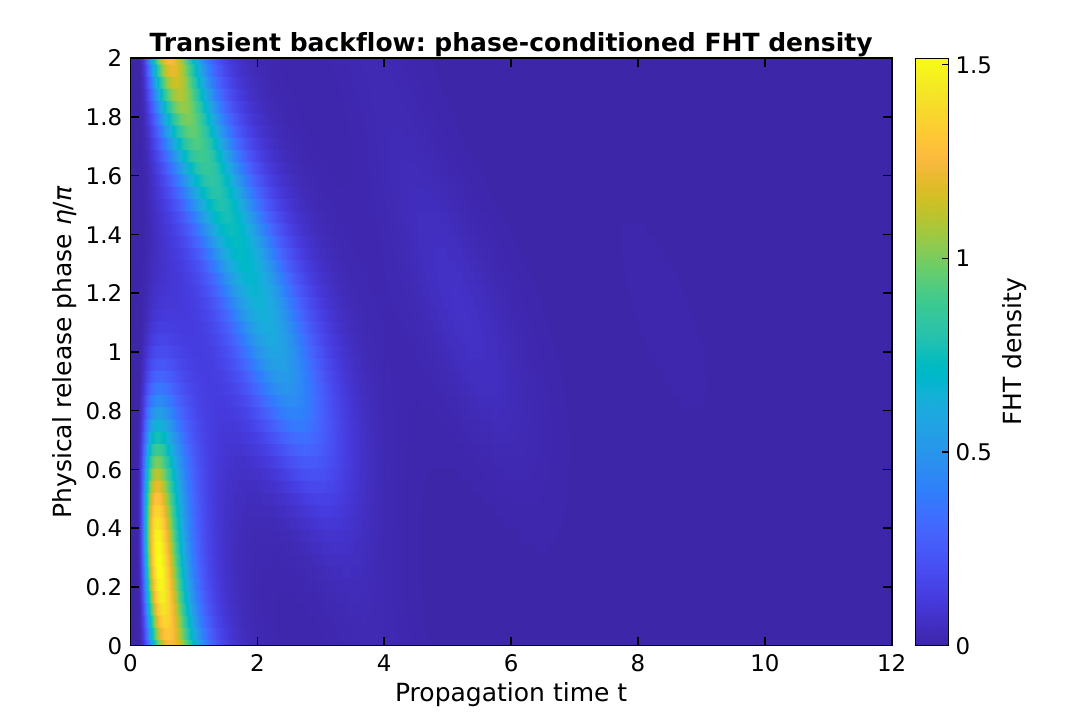}
  \caption{Phase--time density surfaces of the validated
  channel families.
  Left: regular pulsation with
  $(\alpha,\Omega,\mathrm{Pe})=(0.375,2.5,4)$.
  Right: transient backflow with
  $(\alpha,\Omega,\mathrm{Pe})=(1.5,10/3,3)$.
  Computation uses 48 endpoint-excluded release phases;
  the periodic endpoint is added only for display.
  The panels show $t\in[0,8]$ and $[0,12]$, respectively,
  using separate color scales; the computation horizons are
  $12$ and $20$ as listed in
  Table~\ref{tab:consolidated-numerical-settings}.}
  \label{fig:phase-time-channel-surfaces}
\end{figure*}

\Needspace{12\baselineskip}
\section{Phase-resolved Channel Structure and Communication Implications}
\label{sec:phase-structure}

The C-IG comparisons identify which features of a conditional
arrival law survive approximation. We now examine the phase family
itself: how the arrival profile changes with release phase, what a
protocol-induced mixture retains, and which law enters communication
calculations. The two 48-phase families in
Table~\ref{tab:consolidated-numerical-settings} provide the numerical
examples.

\subsection{Phase--time channel structure}
\label{subsec:phase-resolved-deformation}

Figure~\ref{fig:phase-time-channel-surfaces} shows the
conditional densities over release phase and propagation time.
For regular pulsation, the principal arrival ridge shifts and changes
shape with $\eta$. The backflow case exhibits a wider displacement,
including delayed arrivals at adverse release phases. The two parameter settings
differ in amplitude, dimensionless frequency, and P\'eclet number,
so the comparison does not isolate backflow as the sole cause.

Phase changes both accumulated displacement $M_\eta(t)$ and
the moving boundary $b_\eta(t)$. An early favorable interval
removes molecules before later parts of the waveform are encountered,
changing the population available for subsequent peaks. This survival
history is present in the Volterra law and helps explain why the
C-IG assessment must examine density structure as well as cumulative
arrivals.

\subsection{Protocol-induced mixtures and phase averaging}
\label{subsec:phase-averaging-concealment}

Uniform release-phase randomization gives the finite-grid
mixture
\begin{equation}
  \widehat{\overline F}_{48}(t)
  =\frac{1}{48}\sum_{k=0}^{47}\widehat F_{\eta_k}(t),
  \qquad\eta_k=\frac{2\pi k}{48}.
  \label{eq:forty-eight-phase-average-cdf}
\end{equation}
This approximates~\eqref{eq:uniform-phase-mixture} using the
CDF convention in
\secref{sec:numerical-methods}{subsec:common-baselines-metrics}.
It combines the phase-dependent arrival profiles but no longer identifies
the conditional law associated with a particular release epoch.

To quantify the variation being averaged, set $\Theta=\eta(S)$
and $T=Y-S$. The law of total variance gives
\begin{equation}
  \begin{aligned}
    \operatorname{Var}(T)
    &=\mathbb E_\Theta[\operatorname{Var}(T\mid\Theta)]\\
    &\quad+\operatorname{Var}_\Theta(\mathbb E[T\mid\Theta]).
  \end{aligned}
  \label{eq:phase-total-variance}
\end{equation}
The second term measures variation of the mean delay across
release phases. Under uniform phase mixing, its fraction of total
delay variance is
$R_{\mathrm{between}}=
\operatorname{Var}_\Theta(\mathbb E[T\mid\Theta])/\operatorname{Var}(T)$.
Table~\ref{tab:phase-structure-summary} reports its estimates using
the finite-horizon moment convention in
\appsubref{app:numerical-implementation}{appsub:phase-additional}.

A complementary measure compares the conditional CDFs directly:
\begin{equation}
  \widehat\Delta_{\mathrm{phase},48}
  =\max_{0\leq k,\ell<48}
    \mathcal D_{t_{\mathrm{cmp}}}
    (\widehat F_{\eta_k},\widehat F_{\eta_\ell}).
  \label{eq:phase-sensitivity-ks}
\end{equation}
This maximum is taken over pairs of sampled phases and, through
$\mathcal D_{t_{\mathrm{cmp}}}$, over the comparison time grid.
It responds to distribution-shape changes as well as changes in the
mean; both diagnostics characterize the Volterra reference family.

In Table~\ref{tab:phase-structure-summary}, differences in the
phase-conditioned mean delay account for $3.27\%$ of the mixture
variance in the regular family and $17.7\%$ in the backflow family.
The corresponding maximum CDF separations are $0.3069$ and $0.7690$.
Thus, even the regular family's small between-phase variance fraction
coexists with a substantial difference in cumulative arrivals.
Variation of the mean alone does not capture the full phase dependence,
and uniform phase averaging can conceal distinct conditional profiles.

\begin{table}[!tp]
\caption{Phase Sensitivity of the Two Uniformly Weighted
48-Phase Channel Families}
\label{tab:phase-structure-summary}
\centering
\renewcommand{\arraystretch}{1.15}
\setlength{\tabcolsep}{4pt}
\begin{tabularx}{\columnwidth}{@{}>{\RaggedRight\arraybackslash}Xcc@{}}
\toprule
Diagnostic & Regular & Backflow\\
\midrule
Between-phase variance fraction, $R_{\mathrm{between}}$
  & $3.27\%$ & $17.7\%$\\
Maximum pairwise CDF separation,
$\widehat\Delta_{\mathrm{phase},48}$
  & $0.3069$ & $0.7690$\\
\bottomrule
\end{tabularx}
\end{table}

\subsection{Timing likelihoods and absorption probabilities}
\label{subsec:phase-aware-communication}

In the transition law~\eqref{eq:communication-kernel},
a candidate release time selects both $y-s$ and $\eta(s)$.
Using a single phase-averaged density therefore changes the
likelihood for distinguishing candidate release times.

For example, consider a single-molecule timing alphabet
$\{s_1,\ldots,s_J\}$ with prior probabilities
$\pi_1,\ldots,\pi_J$.
Under the ideal observation model of Section~\ref{sec:phase-resolved-channel},
a maximum-a-posteriori decision is
\begin{equation}
  \widehat j_{\mathrm{MAP}}(y)
  \in
  \arg\max_{1\leq j\leq J}
    \pi_j f_{\eta(s_j)}(y-s_j),
  \label{eq:phase-aware-map-rule}
\end{equation}
where $f_\eta(t)=0$ for $t\leq0$.
Using C-IG replaces $f_{\eta(s_j)}$ by
$g_{\eta(s_j)}$.

For a reception deadline $t_{\mathrm w}$ measured
from release, the absorption probability is
\begin{equation}
  P_{\mathrm{abs}}(s,t_{\mathrm w})
  =
  \Prob(Y\leq s+t_{\mathrm w}\mid S=s)
  =
  F_{\eta(s)}(t_{\mathrm w}).
  \label{eq:phase-conditioned-capture}
\end{equation}
The C-IG estimate is $G_{\eta(s)}(t_{\mathrm w})$.
For $0\leq t_{\mathrm w}\leq t_{\mathrm{cmp}}$,
\begin{equation}
  \begin{aligned}
    &|F_{\eta(s)}(t_{\mathrm w})
      -G_{\eta(s)}(t_{\mathrm w})|\\
    &\qquad\leq
    \mathcal D_{t_{\mathrm{cmp}}}
      (F_{\eta(s)},G_{\eta(s)}).
  \end{aligned}
  \label{eq:capture-probability-error}
\end{equation}
Thus, the CDF metric bounds the absorption-probability error
uniformly over deadlines in the comparison window.
The errors in Section~\ref{sec:cig-accuracy} estimate this
quantity for the tested C-IG settings and numerical grids.
For the phase families in Table~\ref{tab:phase-structure-summary},
$\widehat\Delta_{\mathrm{phase},48}$ instead compares two release
phases at the same elapsed-time deadline. The largest differences in absorption probability over the sampled phases and deadlines are $30.69$ percentage
points for the regular family and $76.90$ percentage points for the
backflow family. These are absolute phase-to-phase differences,
not C-IG approximation errors.

For a release protocol with phase measure $q$, the phase-averaged
absorption probability is $F_q(t_{\mathrm w})$
from~\eqref{eq:general-phase-mixture-cdf}.
Absolute arrival times additionally retain the release-time shift
as described in
\secref{sec:phase-resolved-channel}{subsec:phase-mixture}.
Deadline calculations use CDFs, whereas timing likelihoods
use density values. Preserving cumulative arrivals alone therefore
does not establish accurate timing likelihoods, motivating the joint
assessment of CDF errors, density errors, and peak structure in
Section~\ref{sec:cig-accuracy}.

\subsection{Modeling scope}
\label{subsec:scope-limitations-extensions}

The model isolates deterministic pulsatility in a one-dimensional
link with known parameters, constant diffusivity, and perfect absorption.
Spatial velocity profiles~\cite{Wicke2018DuctFlow}, partial absorption
or reaction~\cite{Ahmadzadeh2016Reactive}, uncertain environmental phase,
and molecule--flow feedback require additional modeling. Repeated
emissions also require molecule counts, receiver observations, and an
intersymbol-interference model. The present results establish channel-law
properties and approximation accuracy; they do not establish capacity
or receiver error-rate improvements. Other smooth deterministic
waveforms can use the cumulative-displacement formulation, subject to
the regularity and long-time assumptions of
Section~\ref{sec:volterra-characterization} and a new C-IG accuracy
assessment.

%% file: sections/08_conclusion.tex
\Needspace{4\baselineskip}
\section{Conclusion}
\label{sec:conclusion}

This work develops a phase-resolved molecular timing-channel
formulation with constant diffusivity and pulsatile drift,
motivated by molecular transport under stable medium conditions
and cardiac-driven flow variation.
The integrated conference C-IG construction provides a tractable
analytical approximation to the resulting first-hitting laws,
including regimes with transient backflow.
The exact Volterra characterization supplies the reference law,
with regularity, uniqueness, properness, and moment properties.
Numerical comparisons show that close cumulative agreement can
coexist with density and peak-count discrepancies.
Phase-family analysis further shows that phase averaging can
conceal differences among conditional arrival profiles.
These results clarify which arrival-time structures C-IG
preserves in the tested regimes and how release-phase information
enters timing likelihoods and absorption probabilities within
a prescribed reception window.

%% file: appendices/05_numerical_implementation.tex
\section{Numerical Implementation Details}
\label{app:numerical-implementation}

\subsection{Refinement and deterministic residuals}
\label{appsub:deterministic-details}

The pulsatile-drift refinement test uses
$(d,D,v_0,a,\omega)=(1,0.2,0.8,0.3,2)$ at $\eta=0.4$,
$t_{\max}=10$, and
$\Delta t\in\{0.016,0.008,0.004\}$.
The finest solution is mapped to the coarser grids by
piecewise cubic Hermite interpolation to evaluate
\begin{equation}
  E_1(\Delta t)
  =
  \int_0^{10}
  \left|
    \widehat f_\eta^{(\Delta t)}(t)
    -
    \widehat f_\eta^{(0.004)}(t)
  \right|dt,
  \label{eq:grid-refinement-error}
\end{equation}
using trapezoidal quadrature.

The independent first-contact check evaluates
\begin{equation}
  \begin{aligned}
    r_{\mathrm F,\eta}(t)
    &=
    \int_0^t
      \overline\Phi\!\left(
        \frac{b_\eta(t)-b_\eta(u)}{\sqrt{t-u}}
      \right)\widehat f_\eta(u)\,du\\
    &\quad-
    \overline\Phi\!\left(
      \frac{b_\eta(t)}{\sqrt t}
    \right),
    \qquad t>0.
  \end{aligned}
  \label{eq:fortet-validation-residual}
\end{equation}
with the tail kernel set to its diagonal limit $1/2$.
The finite-horizon stopped-moment residuals are
\begin{equation}
  \widehat r_{1,\eta}
  =
  \int_0^{t_{\max}}
    [d-M_\eta(t)]\widehat f_\eta(t)\,dt
  \label{eq:first-martingale-diagnostic}
\end{equation}

and
\begin{equation}
  \widehat r_{2,\eta}
  =
  \int_0^{t_{\max}}
    \left([d-M_\eta(t)]^2-2Dt\right)
    \widehat f_\eta(t)\,dt.
  \label{eq:second-martingale-diagnostic}
\end{equation}
with units of length and squared length, respectively.
The upper block of Table~\ref{tab:reference-diagnostics} uses
$\Delta t=10^{-3}$, horizon $15$ for the constant-drift and regular
cases, and horizon $20$ for backflow. Constant-drift parameters are
$(1,0.2,0.8,0,2)$; the regular and backflow parameters are those
in Table~\ref{tab:consolidated-numerical-settings}.
The regular phases are $\eta=0.4+j\pi/2$, $j=0,\ldots,3$,
and the backflow phase is $3\pi/2$.

\subsection{Trajectory simulation and sampling uncertainty}
\label{appsub:trajectory-details}

For Fig.~\ref{fig:volterra-oracle-validation}(c), the regular-pulsation case at $\eta=0.4+\pi/2$ uses $N_{\mathrm{tr}}=1000$,
$\Delta t_{\mathrm{tr}}=0.005$, and horizon $15$.
Deterministic displacement is evaluated exactly at path-grid
endpoints. Missed within-step crossings are treated with a
locally linearized Brownian-bridge construction using eight
bisection levels, following~\cite{Gobet2001}.
The MATLAB generator is \texttt{twister}, with seed $20260806$.
The KS statistic uses both ECDF limits at each ordered hitting time,
with the terminal-normalized reference CDF evaluated by piecewise
cubic Hermite interpolation.

The conference calculation in
Fig.~\ref{fig:cig-direct-comparison} uses
$(d,D,v_0,a,\omega,\eta)=(5,1,1,2,2\pi,0)$ and $10^5$
trajectories, advanced by Euler--Maruyama to time $8$ with
step $10^{-3}$ and left-endpoint drift.
The NumPy generator is \texttt{MT19937}, with seed $42$.
Detected crossings remove the particle, with hitting time
interpolated linearly between step endpoints and no
Brownian-bridge correction. Histogram bins have width $0.05$;
bin counts are divided by $0.05\times10^5$ and cumulative counts
by $10^5$, retaining nonarrivals in both denominators.

For an ECDF based on $N_{\mathrm{tr}}$ independent samples,
the two-sided DKW confidence-band half-width at error probability
$\delta$ is
\cite{Massart1990}
\begin{equation}
  \epsilon_{\mathrm{DKW}}
  =
  \sqrt{
    \frac{\log(2/\delta)}{2N_{\mathrm{tr}}}
  }.
  \label{eq:dkw-radius}
\end{equation}

The value $0.04295$ in
\secref{sec:numerical-methods}{subsec:trajectory-validation}
uses $\delta=0.05$ and $N_{\mathrm{tr}}=1000$.

\subsection{Phase-family grids and moments}
\label{appsub:phase-gates}

All 48 phases within a family share the time step and
horizon in Table~\ref{tab:consolidated-numerical-settings}.
Acceptance requires finite output,
$\epsilon_{\mathrm{mass}}\leq2\times10^{-4}$,
$\epsilon_-\leq10^{-8}$, and
$\min_n\widehat f_{\eta,n}\geq-10^{-8}$.
The positive terminal deficit
\begin{equation}
  \widehat\epsilon_{\mathrm{tail},\eta}
  =
  \max\{0,1-\widehat F_\eta(t_{\max})\}
  \label{eq:numerical-tail-deficit}
\end{equation}
must be at most $10^{-4}$.
If the positive-deficit threshold is exceeded, the phase-family
workflow permits multiplying the common horizon by $1.5$, at most twice.
All 96 laws pass at the horizons in
Table~\ref{tab:consolidated-numerical-settings}, so no expansion is used.

\label{appsub:phase-additional}
For the variance fractions in
Table~\ref{tab:phase-structure-summary}, mass-normalized finite-horizon
moments of each phase law give
\begin{equation}
  \begin{aligned}
    m_{r,k}
      &=\frac{\operatorname{Trap}_{[0,t_{\max}]}
          [t^r\widehat f_{\eta_k}(t)]}
        {\widehat F_{\eta_k}(t_{\max})},\qquad r=1,2,\\
    R_{\mathrm{between}}
      &=\frac{\langle m_{1,k}^2\rangle-\langle m_{1,k}\rangle^2}
        {\langle m_{2,k}\rangle-\langle m_{1,k}\rangle^2},
    \qquad \langle z_k\rangle=\frac1{48}\sum_{k=0}^{47}z_k.
  \end{aligned}
  \label{eq:finite-horizon-variance-fraction}
\end{equation}
These yield $R_{\mathrm{between}}=0.0326848$ and $0.176954$
for the regular and backflow families. Normalization applies only to
these truncated-law moments; the reference CDFs in
\eqref{eq:forty-eight-phase-average-cdf} and
\eqref{eq:phase-sensitivity-ks} are unchanged. No tail contribution is
extrapolated.